
\documentclass[11pt]{article}
\usepackage[T1]{fontenc}
%
\usepackage{fullpage,amssymb,graphicx}
%
%
\usepackage{xcolor}
\usepackage{framed}

\usepackage[
  pdfauthor={Frank Nielsen},
  pdftitle={Curved representational Bregman divergences and their applications},
  pdfsubject={arxiv},
  pdfkeywords={Bregman divergences ,curved exponential family, monotonic embeddings , $\alpha$-divergences ,centroids , Bregman projection, space of spheres, cosine dissimilarity , Bregman bias-variance decomposition}
]{hyperref}

\newenvironment{proof}{Proof:}{$\square$}

\DeclareSymbolFont{YHlargesymbols}{OMX}{yhex}{m}{n}
\DeclareMathAccent{\wideparen}{\mathord}{YHlargesymbols}{"F3}

\usepackage{comment,url}

\def\alphasimplex{{\wideparen{\Delta_m^{\alpha}}}}

\newtheorem{Remark}{Remark}
\newtheorem{Theorem}{Theorem}
\newtheorem{Definition}{Definition}
\newtheorem{Proposition}{Proposition}
\newtheorem{Example}{Example}

\newtheorem{Corollary}{Corollary}

 \newenvironment{BoxedTheorem}
   { \colorlet{shadecolor}{black!10}\begin{shaded}\begin{Theorem}}
   {\end{Theorem}\end{shaded}}

\graphicspath{{Fig/}{./}}

\def\barJ{{\bar J}}
\def\JB{\mathrm{JB}}
\def\JD{\mathrm{JD}}
\def\JS{\mathrm{JS}}

\def\barA{{\bar A}}
\def\barH{{\bar H}}

\def\calP{\mathcal{P}}

\def\ubartheta{{\underline{\theta}}}
\def\bartheta{{\bar\theta}}

\def\barCH{ {\overline{\mathrm{CH}}} }

\def\thetafunc{{\bar\theta}}
\def\lambdafunc{{\bar\lambda}}

\def\ttheta{{\tilde{\theta}}}
\def\tTheta{{\tilde{\Theta}}}

 \def\CN{\mathcal{CN}}
 
\def\calC{\mathcal{C}}
\def\drev{{\star}}
\def\st{{\ :\ }}

\def\eqdef{:=}
\def\LSE{\mathrm{LSE}}
\def\kl{\mathrm{kl}}
\def\dom{\mathrm{dom}}

\def\diag{\mathrm{diag}}

\def\barA{{\bar A}}

\def\barH{{\bar H}}

\def\bartheta{{\bar\theta}}
\def\ubartheta{{\underline{\theta}}}

\def\bbR{\mathbb{R}}
\def\bartheta{\bar\eta}

\def\st{\ :\ }
\def\calP{\mathcal{P}}

\def\calX{\mathcal{X}}
\def\calF{\mathcal{F}}
\def\KL{\mathrm{KL}}

\def\dmu{\mathrm{d}\mu}
\def\Sym{\mathrm{Sym}}

\def\inner#1#2{{\langle #1,#2\rangle}}
\def\Inner#1#2{{\left\langle #1,#2\right\rangle}}
\def\calE{\mathcal{E}}
\def\bartheta{{\bar\theta}}

\def\calA{\mathcal{A}}

\def\inner#1#2{{\langle #1,#2\rangle}}
\def\dmu{\mathrm{d}\mu}
\def\bbR{\mathbb{R}}
\def\inner#1#2{{\langle #1,#2\rangle}}
\def\st{{\ :\ }}
\def\calP{\mathcal{P}}

\def\st{\ :\ }
\def\bbR{\mathbb{R}}

\def\mattwo#1#2#3#4{{\left[\begin{array}{ll}#1 & #2 \cr  #3 & #4 \end{array}\right]}}
\def\st{{\ :\ }}

\def\bbC{\mathbb{C}}

\def\topint{\mathrm{int}}

\def\relint{\mathrm{relint}}
\def\calU{\mathcal{U}}
\def\calA{\mathcal{A}}

\sloppy

\begin{document}
\title{Curved representational Bregman divergences\\ and their applications\footnote{A preliminary version of this work appeared in~\cite{nielsen2025curved}}}

\author{Frank Nielsen\\ \ \\ Sony Computer Science Laboratories Inc.\\ Tokyo, Japan}

\date{}

\maketitle              
\begin{abstract}
By analogy to the terminology of curved exponential families in statistics, we define curved Bregman divergences as Bregman divergences restricted to non-affine parameter subspaces and sub-dimensional Bregman divergences when the restrictions are affine. A common example of curved Bregman divergence is the cosine dissimilarity between normalized vectors: a curved squared Euclidean divergence.
We prove that the barycenter of a finite weighted set of parameters under a curved Bregman divergence amounts to the right Bregman projection onto the non-affine subspace
of the barycenter with respect to the full Bregman divergence, and interpret a generalization of the weighted Bregman centroid of $n$ parameters as a  $n$-fold sub-dimensional Bregman divergence.
We demonstrate the significance of curved Bregman divergences with several examples: (1) symmetrized Bregman divergences, (2) pointwise symmetrized Bregman divergences, (3) additively weighted quadratic Bregman divergences, and (4) the Kullback-Leibler divergence between circular complex normal distributions.
We explain how to reparameterize sub-dimensional Bregman divergences on simplicial sub-dimensional domains.
We then consider monotonic embeddings to define representational curved Bregman divergences and show that the $\alpha$-divergences are representational curved Bregman divergences with respect to $\alpha$-embeddings of the probability simplex into the positive measure cone. 
As an application, we report an efficient method to calculate the intersection of a finite set of  $\alpha$-divergence spheres.
\end{abstract}

\noindent Keywords: Bregman divergences \and curved exponential family \and monotonic embeddings \and $\alpha$-divergences \and centroids \and Bregman projection \and space of spheres; cosine dissimilarity; Bregman bias-variance decomposition; Hilbert metric distance. 

\sloppy
%

\section{Introduction and contributions}

Let $V$ be a finite-dimensional Hilbert space  equipped with the inner product denoted by $\inner{\cdot}{\cdot}$.
The inner product induces a norm $\|x\|=\sqrt{\inner{x}{x}}$ and a metric distance $\rho(x,x')=\|x-x'\|$ which endows $V$ with the corresponding metric topology.
The default setting is $V=\bbR^m$ with the Euclidean inner product $\inner{x}{y}=\sum_{i=1}^m x_iy_i$, the  scalar product or dot product.
The Bregman divergence~\cite{Bregman-1967} induced by a strictly convex and differentiable convex function $F:\Theta\subset V\rightarrow\bbR$ is given by
\begin{equation}
B_F(\theta:\theta') \eqdef F(\theta)-F(\theta')-\inner{\theta-\theta'}{\nabla F(\theta')},
\end{equation}
where $\theta\in\Theta$ and $\theta'\in\topint(\Theta)$ (the topological interior of $\Theta$).
Bregman divergences are non-negative dissimilarity measures (i.e., $B_F(\theta:\theta') \geq 0$ with equality if and only if $\theta=\theta'$) which generalize both the squared Mahalanobis distance and the Kullback-Leibler divergence.
Their axiomatization was studied by Csisz\'ar~\cite{Csiszar-1991}.

We consider a class of well-behaved Bregman generators  which are of  Legendre type~\cite{LegendreType-1967,rockafellar-1997}:
A function $F:\Theta\rightarrow\bbR$ is of Legendre-type if  (i) $\Theta$ is topologically open  and  (ii)
 $\lim_{\theta\rightarrow\partial\Theta} \|\nabla F(\theta)\|=\infty$ where $\partial\Theta$ denotes the topological boundary of domain $\Theta$.

When the Bregman generator is of Legendre type~\cite{LegendreType-1967}, the following holds:
(a) there is a one-to-one correspondence between 
 the primal parameter $\theta$ and the corresponding dual parameter $\eta$ defined by $\eta(\theta)=\theta^*=\nabla F(\theta)$, and
(b) the Legendre-Fenchel convex conjugate yields a Legendre-type function: $(H,F^*)$ given by
$F^*(\eta)=\inner{\eta}{\nabla F^{-1}(\eta)}-F(\nabla F^{-1}(\eta))$,
where the open gradient domain is denoted by $H=\{\nabla F(\theta)\st \theta\in\Theta\}$,
 and (c)
  the Legendre-Fenchel transform is an involution, i.e., ${(\Theta,F)^*}^*=(H,F^*)^*=(F,\Theta)$.
	Legendre-type Bregman divergences satisfy the biduality identity~\cite{zhang2004divergence} $B_F(\theta_1:\theta_2)=B_{F^*}^\drev(\eta_1:\eta_2)$ where $D^\drev(x:y)=D(y:x)$ is the reference duality.
 
The paper is organized as follows:
In \S\ref{sec:cbd}, we define curved Bregman divergences (Definition~\ref{def:cbd}).
A characterization of the curved Bregman centroid by Bregman projection is given  in \S\ref{sec:cbc} (Theorem~\ref{thm:proj}).
Several examples illustrating curved Bregman divergences are described: Symmetrized Bregman divergences in \S\ref{sec:symbd}  
 additively weighted quadratic Bregman divergences in \S\ref{sec:addweightquadBD}, the Kullback-Leibler geometry of complex circular normal distributions in \S\ref{sec:CN}.
We further reinterpret and extend the left-sided Bregman centroid in \S\ref{sec:leftsidedBDcentroid} (Proposition~\ref{prop:leftsidedGenBregmanCentroid}).
In \S\ref{sec:crbd}, we further introduce a representation function to define representational curved Bregman divergence (Definition~\ref{def:crbd}).
We prove that $\alpha$-divergences are representational curved Bregman divergences (Proposition~\ref{prop:alphabdrep}).
Finally, as an application, we show how to compute the intersection of a finite set of $\alpha$-divergence spheres in \S\ref{sec:inter} (Proposition~\ref{prop:intersection}).

\begin{figure}
\centering
\includegraphics[width=\textwidth]{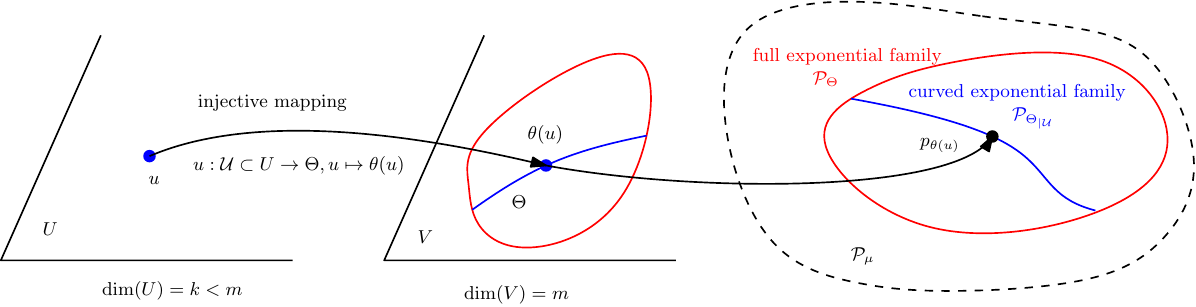}  
\caption{Curved exponential family.}
\label{fig:curvedExpFam}
\end{figure}

\section{Curved Bregman divergences}\label{sec:cbd}

\subsection{Definitions}

Let $(\calX,\calA,\mu)$ be a measure space with $\calX$ the sample space, $\calA$ the $\sigma$-algebra, and $\mu$ a positive measure.
Let us denote by $\calP_\mu$ the set of probability measures dominated by $\mu$ admitting Radon-Nikodym densities.  
The notion of curved exponential families in statistics~\cite{efron1975defining,CEF-Amari-1982} is defined as follows:

\begin{Definition}[Curved exponential family, Figure~\ref{fig:curvedExpFam}]\label{def:curvedEF}
Let $U$ and $V$ be two finite-dimensional vector spaces.
Consider a full exponential family $\calP_\Theta=\{p_\theta=\frac{P_\theta}{\dmu} \st\theta\in\Theta\subset V\}\subset\calP_\mu$ with $\dim(\Theta)=m$.
Then any injective mapping $u:\mathcal{U}\subset U\rightarrow \Theta, u\mapsto \theta(v)$ with $\dim(\calU)<\dim(\Theta)$ defines
a sub-dimensional statistical model $\calP_{\Theta_{|\calU}}=\{ P_{\theta}(u)\st u\in\calU\}$ that is called a curved exponential family.
\end{Definition}

\begin{Example}[\cite{shima2000geometry}]\label{ex:Shima}
Consider a domain $\calU$ of dimension $m'$ and an injective linear mapping $\Sigma(u)$ from $\calU$ to $\Sym^{++}(d,\bbR)$, the set of $d\times d$ symmetric positive-definite matrices.
Then the statistical model 
$\calP'=\left\{p_{\mu,\Sigma(u)} \st \mu\in\bbR^d, u\in\calU \right\}$
is a curved exponential family~\cite{shima2000geometry} with natural parameter $\theta=\Sigma(u)\mu$ for $u\in\calU$.
For example, one can choose $\calU=\bbR_{>0}$ (positive reals) with $\Sigma(u)=u\, I$ ($m'=1$), where $I$ denotes the $d\times d$ identity matrix.
\end{Example}

The notion of curved exponential families allows to consider embedded statistical submanifolds of arbitrary topologies (e.g., Fisher's normal circle model~\cite{efron1978assessing}). 
For example, any parameterized family of distributions in the open standard simplex is a curved exponential family, a sub-model of the full categorical family.
We may also view any $m$-dimensional Riemannian manifold $S$ embedded in some Euclidean space $\mathbb{E}^D$ with $D\leq \frac{m(m+1)}{2}$ (Nash embedding theorem) as a curved exponential family of the isotropic $D$-dimensional Gaussian model (with Fisher-Rao Euclidean geometry).  
Thus although a full statistical model admit generally a global chart, curved statistical models derived thereof can have arbitrary topology.
See also the notion of doubly auto-parallel submanifolds~\cite{ohara2024doubly}.

By analogy, let us define the notion of curved Bregman divergences: 

\begin{Definition}[Curved Bregman divergence]\label{def:curvedBD}\label{def:cbd}
Let $F:\Theta\rightarrow\bbR$ be a strictly convex and differentiable function defined on the domain $\Theta=\dom(F)\subset\bbR^m$ and
 $\calU\subset\Theta$ with $m'=\dim(\calU)<m$ and $\calU=\{\theta(u) \st u\}$.  
Then the curved Bregman divergence induced by $F$ on $\calU$ is defined by: $\forall u_1\in\calU,u_2\in\relint(\calU)$,
\begin{equation}
B_F(u_1:u_2) \eqdef F(\theta(u_1))-F(\theta(u_2))-\inner{\theta(u_1)-\theta(u_2)}{\nabla_u (F(\theta(u_2))}, 
\end{equation}
where $\relint(\calU)$ is the topologically relative interior~\cite{BD-2005} of $\calU$.  
\end{Definition}

\begin{figure}
\centering
\includegraphics[width=\textwidth]{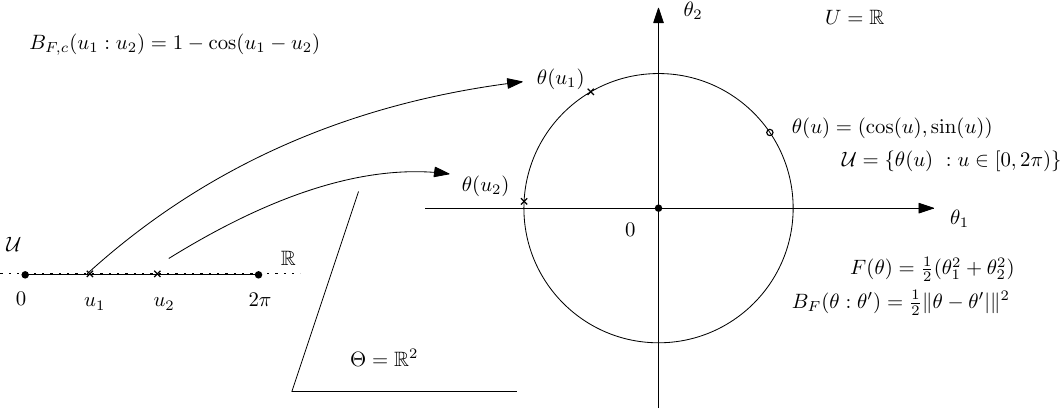}  

\caption{Example of a curved Bregman divergence: The curved circle Euclidean divergence amounts to the cosine dissimilarity.}
\label{fig:curvedCircleEucl}
\end{figure}

\begin{Example}[curved circle Euclidean divergence]\label{ex:cEuclBD}
Consider the Bregman generator $F(\theta)=\frac{1}{2}\, \inner{\theta}{\theta}$ defined in $\Theta=\bbR^2$ yielding the squared Euclidean distance as the corresponding Bregman divergence: 
$B_F(\theta_1:\theta_2)=\frac{1}{2}\, \inner{\theta_2-\theta_1}{\theta_2-\theta_1}=\frac{1}{2}\, \|\theta_2-\theta_1\|_2^2$.
Let $\theta(u)=(\cos(u),\sin(u))$ defining the unit circle $\calU=\{\theta(u)=(\cos(u),\sin(u)) \ :\ u\in\calU=\bbR\}$ (Figure~\ref{fig:curvedCircleEucl}).
Then we get the following curved circle Euclidean divergence:
\begin{eqnarray*}
B_{F,\theta}(u_1:u_2) &=& B_F(\theta(u_1):\theta(u_2)),\\
&=& \frac{1}{2}\, \left( (\cos(u_1)-\cos(u_2))^2 +  (\sin(u_1)-\sin(u_2))^2 \right),\\
&=& 1-\left(\cos(u_1)\cos(u_2)+\sin(u_1)\sin(u_2)\right).
\end{eqnarray*}
Using the trigonometric identity $\cos(u_1)\cos(u_2)+\sin(u_1)\sin(u_2)=\cos(u_1-u_2)$, we end up with the following expression of the curved circle Euclidean divergence:
$$
B_{F,c}(u_1:u_2) =1-\cos(u_1-u_2).
$$
That is, the  curved circle Euclidean divergence is the cosine dissimilarity~\cite{senoussaoui2013efficient} between vectors $u_1$ and $u_2$ on the unit circle $\calU$.
See Appendix~\ref{sec:a:cosdis} for a symbolic code checking that the curved circle Euclidean divergence yields the cosine dissimilarity.

Divergence $B_{F,c}$ is not a Bregman divergence because $F(\theta(u))$ is not a proper Bregman generator (i.e., not strictly convex and differentiable):

\begin{eqnarray*}
F(\theta(u)) &=& \frac{1}{2}\, \inner{\theta(u)}{\theta(u)},\\
&=& \frac{1}{2}\, \inner{(\cos(u),\sin(u))}{(\cos(u),\sin(u))},\\
&=& \frac{1}{2}.
\end{eqnarray*}

This example extends to arbitrary dimension $m$ by defining the $(m-1)$-dimensional sphere 
$\calU=\{ u :\ \|u\|=1 \in\calU=\bbR^m\}$ and $F(\theta)=\frac{1}{2}\, \inner{\theta}{\theta}$.

Notice that the curved circle Euclidean divergence is studied in statistics for the Fisher circle model~\cite{okudo2020bayes} $\{N(\mu,I_2) \ :\ \mu\in\bbR^2\}$, a 1D curved exponential family of the exponential family of bivariate normal distributions with prescribed identity covariance matrix $I_2$.
\end{Example}

\begin{Example}\label{ex:pointwiseBD}
Let $(\calX,\Omega)$ be a measurable space equipped with a positive measure $\mu$ (eg., Lebesgue or counting measure).
Denote by $\calP_\mu$ the set of positive functions $p(x)>0$ such that $\int p(x) \dmu(x)=1$.
Consider a parametric family $\calF=\{p_\theta \st \theta\in\Theta\}$ of $\calP_\mu$ (e.g., Gaussian family or finite mixture family with prescribed components).
For a scalar Bregman divergence $B_f$ with univariate generator $f$, we define the pointwise weighted Bregman divergence~\cite{jones2002general} for a positive weight function $w(x)>0$ as follows:
$$
B_{w,f}^{\mu}(p_l:p_r) = \int w(x)\, B_f(p_l(x),p_r(x)) \, \dmu(x).
$$

Without loss of generality, let us consider $w(x)=1$ for all $x\in\calX$, and denote the corresponding divergence $B_{f}^{\mu}$.
Then we can express the pointwise Bregman divergence $B_{f}^{\mu}$ on $\calF$ as a curved Bregman divergence:
$$
B_{f,p}(\theta_l:\theta_r)=B_{f}^{\mu}(p_{\theta_l}:p_{\theta_r}),
$$
with $p(\theta)=p_\theta$.
\end{Example}

Example of curved Bregman divergences arise from the Kullback-Leibler divergence between parametric curved exponential families.
For example, when considering the spiked covariance model of Gaussians described~\cite{okudo2020bayes,RMTSpikeCovariance-2001} where the family  
$\{N(0,\lambda uu^\top+I) \st \lambda>0, u\in\bbR^m\}$ of centered Gaussians is studied.

\subsection{Sub-dimensional Bregman divergences}

A curved Bregman divergence is a Bregman divergence $B_F$ (with $\mathrm{dom}(F)=\Theta$) defined on a non-linearly constrained parameter sub-dimensional domain $\Omega\subset\Theta$: It is thus never a Bregman divergence because the domain $\Omega$ of the curved Bregman divergence is not anymore convex.
When the domain constraints are linear, we call the restricted Bregman divergence a sub-dimensional Bregman divergence: The sub-dimensional Bregman divergence as a different Bregman generator than the full-dimensional Bregman divergence.
Indeed, we can view the sub-dimensional Bregman degenerator by considering the graph of $F$ restricted to $\Omega$: This subgraph can be interpreted as the graph on domain $\Omega$ of another function $F_\Omega(\omega)$.

 Notice that we reparameterize a multivariate Bregman divergence as a family of scalar Bregman divergences:

\begin{Proposition}\label{prop:mv2uvBD}
A multivariate Bregman divergence $B_F(\theta:\theta')$  
can be written equivalently as a scalar Bregman divergence $B_{F_{\theta,\theta'}}(0:1)$ so that 
$$
\forall \theta\in\Theta,\theta'\in\topint(\theta),\quad B_F(\theta:\theta') = B_{F_{\theta,\theta'}}(0:1),
$$ 
where $F_{\theta,\theta'}(u) := F(\theta+u(\theta'-\theta))$
is a univariate Bregman generator.
\end{Proposition}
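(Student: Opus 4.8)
The plan is to reduce the multivariate identity to a single one-variable chain-rule computation. Write $g \eqdef F_{\theta,\theta'}$, so that $g(u) = F(\theta + u(\theta'-\theta))$ is the restriction of $F$ to the affine line through $\theta$ and $\theta'$, reparameterized so that $u=0$ corresponds to $\theta$ and $u=1$ to $\theta'$. The strategy is then simply to expand the scalar Bregman divergence $B_g(0:1)$ using its definition and to verify that it collapses to $B_F(\theta:\theta')$.

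First I would record the two endpoint values $g(0)=F(\theta)$ and $g(1)=F(\theta')$. Next, applying the chain rule to the composition of $F$ with the affine map $u \mapsto \theta + u(\theta'-\theta)$ gives $g'(u) = \inner{\nabla F(\theta+u(\theta'-\theta))}{\theta'-\theta}$, so in particular $g'(1) = \inner{\nabla F(\theta')}{\theta'-\theta}$. Substituting into the univariate Bregman divergence $B_g(0:1) = g(0) - g(1) - (0-1)\,g'(1)$ yields $F(\theta) - F(\theta') + \inner{\nabla F(\theta')}{\theta'-\theta}$, and rewriting $\inner{\nabla F(\theta')}{\theta'-\theta} = -\inner{\theta-\theta'}{\nabla F(\theta')}$ by symmetry of the inner product gives exactly the definition of $B_F(\theta:\theta')$. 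This is the entire content of the identity.

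The only part requiring genuine justification --- and the main obstacle, such as it is --- is the claim that $g = F_{\theta,\theta'}$ is a bona fide univariate Bregman generator, i.e., differentiable and strictly convex on a one-dimensional neighborhood of $[0,1]$. Differentiability is immediate from the chain rule, since $F$ is differentiable. For strict convexity, the cleanest argument is that $g$ is the composition of the strictly convex $F$ with the affine map $u\mapsto \theta+u(\theta'-\theta)$, which is non-constant and injective precisely when $\theta'-\theta\neq 0$, i.e. $\theta\neq\theta'$ (the degenerate case $\theta=\theta'$ makes both sides of the identity vanish). Because an affine map preserves convex combinations and is injective here, the strict convexity of $F$ transfers verbatim to $g$. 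When $F$ is twice differentiable this is also visible from $g''(u) = \inner{\nabla^2 F(\theta+u(\theta'-\theta))(\theta'-\theta)}{\theta'-\theta}$, which is positive under positive-definiteness of the Hessian. Finally, since the reference argument $\theta'$ is assumed to lie in $\topint(\Theta)$, the point $u=1$ sits in the interior of the domain of $g$, so $g'(1)$ is well-defined; this matches the hypothesis $\theta'\in\topint(\Theta)$ in the statement.
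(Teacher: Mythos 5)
Your proof is correct and follows essentially the same route as the paper's: both expand $B_{F_{\theta,\theta'}}(0:1)$ via the endpoint values and the chain-rule (directional-derivative) formula $F_{\theta,\theta'}'(1)=\inner{\theta'-\theta}{\nabla F(\theta')}$, then collapse the expression to $B_F(\theta:\theta')$. Your additional verification that $F_{\theta,\theta'}$ is genuinely a strictly convex univariate generator (away from the degenerate case $\theta=\theta'$) is a point the paper leaves implicit, so this is a welcome tightening rather than a divergence in method.
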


\begin{proof}
Let us prove that a $m$-variate Bregman divergence $B_F(\theta:\theta')$ may be considered as a family of $1$-dimensional Bregman divergences defined on 1D line segment domains $\calC=[\theta,\theta']$  
with Bregman generators
${F_{\cal_C}}(\lambda)=   F_{\theta,\theta'}(\lambda)= F((1-\lambda)\theta+\lambda\theta')$ for the interval domain $\Delta_1=[0,1]$:  
$B_F(\theta:\theta')=B_{F_{\theta,\theta'}}(0:1)$.

Indeed, consider the directional derivative:
\begin{eqnarray*}
\nabla_{\theta'-\theta}\, F_{\theta,\theta'}(u) &\eqdef& \lim_{\epsilon\rightarrow 0} 
\frac{F(\theta+(\epsilon+u)(\theta'-\theta))-F(\theta+u(\theta'-\theta))}{\epsilon},\\
&=& \inner{\theta'-\theta}{\nabla F(\theta+u(\theta'-\theta))}.
\end{eqnarray*}
This result can be recovered using first-order Taylor approximations:
$$
F(\theta+\lambda(\theta'-\theta))\approx_{\lambda\rightarrow 0} F(\theta)+\lambda \inner{\theta'-\theta}{\nabla F(\theta)}.
$$

Since $F_{\theta,\theta'}(0)=F(\theta)$, $F_{\theta,\theta'}(1)=F(\theta')$, and 
$F_{\theta,\theta'}'(u)=\nabla_{\theta'-\theta} F_{\theta,\theta'}(u)$, it follows that we have
\begin{eqnarray*}
B_{F_{\theta,\theta'}}(0:1)&\eqdef&F_{\theta,\theta'}(0)-F_{\theta,\theta'}(1)-(0-1)\nabla_{\theta'-\theta} F_{\theta,\theta'}(1),\\
&=& F(\theta)-F(\theta')+\inner{\theta'-\theta)}{\nabla F(\theta')} = B_F(\theta:\theta').
\end{eqnarray*}
\end{proof}

Thus a multivariate Bregman divergence $B_F(\theta:\theta')$ for $\theta'\not=\theta$ may be represented by a family of 
univariate Bregman divergences: $\{B_{F_{\theta,\theta'}} \st (\theta,\theta')\in \Theta\times\topint(\Theta), \theta'\not=\theta\}$.

\begin{figure}
\centering
\includegraphics[width=0.95\columnwidth]{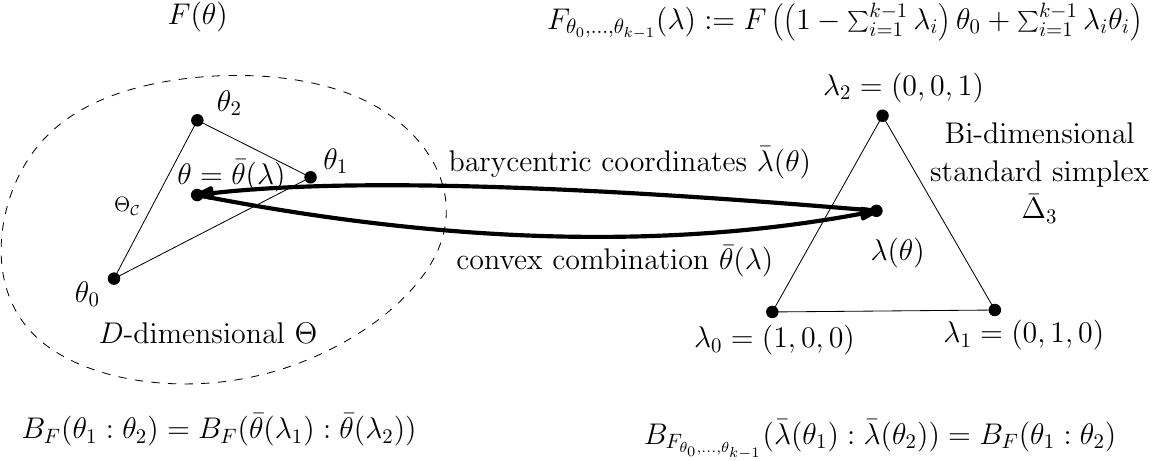}
\caption{Illustration of the reparameterization and sub-dimensional Bregman divergences.\label{fig:subBD}}
\end{figure}

Consider the more general reparameterization case:
Let $k$ parameters $\theta_0,\ldots,\theta_{k-1}$ be in convex position in $\Theta$.
That is, the parameters $\theta_i$'s are the distinct extreme points of the closed convex hull domain $\calC=\barCH(\theta_0,\ldots,\theta_{k-1})$ with
$$
\barCH(\theta_0,\ldots,\theta_{k-1}) =
\left\{  \sum_{i=0}^{k-1} \lambda_i\theta_i \st \lambda_i\geq 0, \sum_{i=0}^{k-1} \lambda_i=1 \right\}.
$$

We may reparameterize the function $F$ restricted to $\calC$ (i.e., ${F_\calC}:\calC\subset\bbR^m\rightarrow\bbR$) 
according to $k-1$ free parameters $\lambda_1,\ldots,\lambda_{k-1}$ since $\lambda_{0}=1-\sum_{i=1}^{k-1} \lambda_i$ (Figure~\ref{fig:subBD}).

\begin{Proposition}[Reparameterized/sub-dimensional Bregman divergence]\label{thm:subBD}
Let $B_F$ be a Bregman divergence for the Legendre-type Bregman generator $F:\Theta\subset\bbR^m\rightarrow\bbR$ with $m=\dim(\Theta)$.
Consider $k$ parameters $\theta_0,\ldots,\theta_{k-1}$ of $\topint(\Theta)$ in convex position, 
and let $\calC=\barCH(\theta_0,\ldots,\theta_{k-1})$ denote their closed convex hull.
For $\lambda\in\bar\Delta_{k-1}$ the closed $(k-1)$-dimensional standard simplex, 
let $\thetafunc(\lambda)\eqdef\sum_{i=1}^{k-1} \lambda_i\theta_i + \left(1-\sum_{i=1}^{k-1} \lambda_i\right)\,\theta_{0}$ (i.e., interpolation using a weighted simplex vertex mean).
We have $\Theta_\calC=\{\bar\theta(\lambda)\ :\ \lambda\in \bar\Delta_{k-1}\}$. 

Conversely, for $\theta\in\Theta_\calC$, let $\lambdafunc(\theta)=(\lambda_0,\ldots,\lambda_{k-1})$ be the barycentric coordinates of $\theta$ 
 such that $\thetafunc(\lambdafunc(\theta))=\theta$.
Denote by $F_{\theta_0,\ldots,\theta_{k-1}}(\lambdafunc(\theta)) = F(\bar\theta(\lambda))$ the $(k-1)$-dimensional Bregman generator.
Then we have the Bregman generator $F_{\theta_0,\ldots,\theta_{k-1}}: \bar\Delta_{k-1} \rightarrow\bbR$, and the following identities hold: 
\begin{equation}
\forall \lambda,\lambda'\in\bar\Delta_{k-1}, \quad 
B_{F_{\theta_0,\ldots,\theta_{k-1}}}(\lambda:\lambda')=B_F(\bar\theta(\lambda):\bar\theta(\lambda')),
\end{equation}
and reciprocally, it holds that
\begin{equation}
\forall \theta,\theta'\in\Theta_\calC, \quad B_F(\theta:\theta')=B_{F_{\theta_0,\ldots,\theta_{k-1}}}(\lambdafunc(\theta):\lambdafunc(\theta')).
\end{equation}
\end{Proposition}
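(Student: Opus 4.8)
The plan is to observe that the reparameterization $\bar\theta$ is an affine map and then to invoke the invariance of Bregman divergences under affine reparameterizations of their generator. First I would write the simplex interpolation in reduced coordinates: setting $u := (\lambda_1,\ldots,\lambda_{k-1})$ for the $k-1$ free barycentric weights (with $\lambda_0 = 1 - \sum_{i=1}^{k-1}\lambda_i$), one has
$$
\bar\theta(\lambda) = \theta_0 + \sum_{i=1}^{k-1} \lambda_i(\theta_i-\theta_0) = \theta_0 + A\,u, \qquad A := [\theta_1-\theta_0,\ldots,\theta_{k-1}-\theta_0],
$$
where $A$ is the $m\times(k-1)$ matrix of edge vectors emanating from $\theta_0$. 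Since the vertices are in convex position and (as the well-definedness of the unique barycentric coordinate map $\bar\lambda$ presupposes) affinely independent, these edge vectors are linearly independent, so $A$ has full column rank $k-1$ and $u\mapsto\theta_0+Au$ is injective.

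Next I would verify that $F_{\theta_0,\ldots,\theta_{k-1}}(u):=F(\theta_0+Au)$ is a genuine Bregman generator on $\bar\Delta_{k-1}$ expressed in these $k-1$ free coordinates. Differentiability follows from the chain rule. For strict convexity I would use the full column rank of $A$: for $u\neq u'$ we have $Au\neq Au'$, so $\bar\theta(u)$ and $\bar\theta(u')$ are distinct points of $\calC\subset\topint(\Theta)$, and the strict convexity of $F$ along the segment joining them transfers to strict convexity of $F_{\theta_0,\ldots,\theta_{k-1}}$ along the corresponding simplex segment.

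The core of the argument is the gradient computation by the chain rule, $\nabla F_{\theta_0,\ldots,\theta_{k-1}}(u)=A^\top\nabla F(\bar\theta(u))$, together with the key cancellation $A(u-u')=\bar\theta(u)-\bar\theta(u')$. Substituting into the definition of the reparameterized divergence, the adjoint identity $\inner{u-u'}{A^\top\nabla F(\bar\theta(u'))}=\inner{\bar\theta(u)-\bar\theta(u')}{\nabla F(\bar\theta(u'))}$ makes $B_{F_{\theta_0,\ldots,\theta_{k-1}}}(u:u')$ collapse exactly to $B_F(\bar\theta(u):\bar\theta(u'))$, which gives the first identity. The reciprocal identity is then immediate: affine independence makes $\bar\theta$ and $\bar\lambda$ mutually inverse bijections between $\bar\Delta_{k-1}$ and $\Theta_\calC$, so substituting $\theta=\bar\theta(\bar\lambda(\theta))$ and $\theta'=\bar\theta(\bar\lambda(\theta'))$ into the first identity yields the second.

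I do not anticipate a real obstacle, since this is in essence the affine-invariance of Bregman divergences; the only points requiring care are to track the dimension reduction to the $k-1$ free coordinates consistently and to confirm that convex position (via the well-definedness of $\bar\lambda$) indeed yields full column rank of $A$, which simultaneously guarantees injectivity of the reparameterization and the transfer of strict convexity. It is worth remarking that $F_{\theta_0,\ldots,\theta_{k-1}}$ is strictly convex and differentiable but generally not of Legendre type, because the $\theta_i$ lie in the open domain and the gradient stays bounded up to the simplex boundary; this causes no difficulty, as a Bregman divergence only requires a strictly convex differentiable generator.
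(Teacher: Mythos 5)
Your proposal is correct and follows essentially the same route as the paper's proof: both compute the gradient of the restricted generator via the chain rule (your matrix form $\nabla F_{\theta_0,\ldots,\theta_{k-1}}(u)=A^\top\nabla F(\bar\theta(u))$ is the paper's component-wise partial-derivative computation) and exploit the affine cancellation $\bar\theta(\lambda)-\bar\theta(\lambda')=A(\lambda-\lambda')$ to transfer the inner-product term. Your additional checks---that convex position gives $A$ full column rank (hence injectivity and strict convexity of the restricted generator) and that the restricted generator need not be of Legendre type---are points the paper leaves implicit, but they do not change the argument.
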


\begin{proof}
Figure~\ref{fig:subBD} illustrates the reparameterization of a multivariate Bregman divergence restricted to a closed simplex domain of $\Theta$.
We term the Bregman generator $F_{\theta_0,\ldots,\theta_{k-1}}$ a simplex Bregman generator.
When the dimension of the simplex is strictly less than then dimension of the domain $\Theta$, the reparameterized Bregman divergence is called a sub-dimensional Bregman divergence.

To prove the identities, consider the partial derivatives of the simplex Bregman generator with respect to $\lambda$:
$$
F_{\theta_0, \ldots, \theta_{k-1}}(\lambda_1,\ldots,\lambda_{k-1}) 
=  F\left(\theta_0 + \sum_{i=1}^{k-1}\lambda_i (\theta_i-\theta_0) \right).
$$
These partial derivatives $\frac{\partial}{\partial\lambda_i} F_{\theta_0, \ldots, \theta_{k-1}}$ are equivalent to the following corresponding directional derivatives:
$$
\frac{\partial}{\partial\lambda_i} F_{\theta_0, \ldots, \theta_{k-1}}(\lambda)
=
(\theta_i-\theta_0)^\top \nabla_{\theta_i-\theta_0} F\left(\theta_0 + \sum_{i=1}^{k-1}\lambda_i (\theta_i-\theta_0) \right).
$$

Thus we get the gradient of the simplex Bregman generator as follows:
$$
\nabla_\lambda {F_{\theta_0, \ldots, \theta_{k-1}}}(\lambda)=\left[\begin{array}{c}
(\theta_1-\theta_0)^\top \nabla F\left(\theta_0 + \sum_{i=1}^{k-1}\lambda_i (\theta_i-\theta_0) \right)\\
\vdots\\
(\theta_{k-1}-\theta_0)^\top \nabla F\left(\theta_0 + \sum_{i=1}^{k-1}\lambda_i (\theta_i-\theta_0) \right)
\end{array}\right].
$$

Since we have
$$
\theta-\theta'=\thetafunc(\lambda)-\thetafunc(\lambda')=\sum_{i=1}^{k-1} (\lambda_i-\lambda_i')(\theta_i-\theta_0),
$$
it follows that we get:
\begin{eqnarray*}
\inner{\theta-\theta'}{\nabla F(\theta')} &=&  \Inner{\sum_{i=1}^{k-1} (\lambda_i-\lambda_i')(\theta_i-\theta_0)}{\nabla F(\theta')},\\
&=&  \sum_{i=1}^{k-1} (\lambda_i-\lambda_i')(\theta_i-\theta_0) \frac{\partial}{\partial\theta_i} F(\theta'),\\
&=&  \Inner{\lambda-\lambda'}{\nabla_\lambda {F_{\theta_0, \ldots, \theta_{k-1}}}(\lambda')}.
\end{eqnarray*}

Hence,  we get with $\lambda=\bar\lambda(\theta)$ and $\lambda'=\bar\lambda(\theta')$, and $F(\theta)=F_{\theta_0,\ldots,\theta_{k-1}}(\lambda)$ and 
 $F(\theta')=F_{\theta_0,\ldots,\theta_{k-1}}(\lambda')$:

\begin{eqnarray*}
B_{F}(\theta:\theta') & \eqdef& F(\theta)-F(\theta')-\inner{\theta-\theta'}{\nabla F(\theta')},\\
&=&  F_{\theta_0,\ldots,\theta_{k-1}}(\lambda)-F_{\theta_0,\ldots,\theta_{k-1}}(\lambda')-\Inner{\sum_{i=1}^{k-1} (\lambda_i-\lambda_i')(\theta_i-\theta_0)}{\nabla F(\theta')},\\
&=& F_{\theta_0,\ldots,\theta_{k-1}}(\lambda)-F_{\theta_0,\ldots,\theta_{k-1}}(\lambda')-\Inner{\lambda-\lambda'}{\nabla_\lambda {F_{\theta_0, \ldots, \theta_{k-1}}}(\lambda')},\\
&=& B_{F_{\theta_0,\ldots,\theta_{k-1}}}(\lambda:\lambda').
\end{eqnarray*}
The Bregman divergence $B_{F}$ on the left hand side of the equality $B_{F}(\theta:\theta')=B_{F_{\theta_0,\ldots,\theta_{k-1}}}(\lambda:\lambda')$ is $m$-variate while the Bregman divergence $B_{F_{\theta_0,\ldots,\theta_{k-1}}}$ on the right hand side is $(k-1)$-variate (for $k\leq m+1$).
\end{proof}

\begin{Example}[KLD as a sub-dimensional Bregman divergence]\label{ex:kldsubbd}
The Kullback-Leibler divergence on the $(m-1)$-dimensional simplex is a sub-dimensional Bregman divergence in disguise of the KLD on the positive cone of $\bbR^m_{>0}$. 
Indeed, consider the full dimensional Bregman divergence  defined on the positive orthant $\tTheta=\bbR_{>0}^m$ for the  Bregman generator 
$F_{\KL^+}(\theta)=\sum_{i=1}^m \theta_i\log\theta_i$.
This Bregman generator $F_{\KL^+}$ is said separable or decomposable as it can be expressed as a sum of scalar Bregman generators:
Namely, we have
$F_{\KL^+}(\theta)=\sum_{i=1}^m f_{\kl^+}(\theta_i)$ where $f_{\kl^+}: \bbR_{>0} \rightarrow \bbR$ maps $\theta$ to $f_{\kl^+}(\theta)=\theta\log\theta$ (Shannon negentropy scalar function). We denote by $S^+(\theta)=-F_{\KL^+}(\theta)$ Shannon entropy relaxed to positive measures.
The corresponding Bregman divergence is the Kullback-Leibler divergence extended to the positive orthant cone~\cite{Csiszar-1991} (i.e., the set of positive arrays~\cite{IG-2016} which includes the probability simplex):
$$
B_{F_\KL^+}(\ttheta:\ttheta') = \sum_{i=1}^{m} B_{F_{\kl^+}}(\ttheta_i:\ttheta_i')
  = \sum_{i=1}^{m}  \ttheta_i\log\frac{\ttheta_i}{\ttheta'_i}+\ttheta_i'-\ttheta_i
:= D_{\KL^+}[\ttheta:\ttheta'].
$$

Let $e_i=(0,\ldots,0,1,0,\ldots,0)$ be the $i$th one hot vector of the natural basis $\{e_1,\ldots,e_m\}$ of $\bbR^m$. 
Let $\lambdafunc(\theta)$ be the barycentric coordinates of $\theta$ with respect to $e_1,\ldots,e_{m-1}$, and 
denote by $\Lambda_{m-1}=\bar\Delta_{m-1}$ the set of barycentric parameters.
Then  $\thetafunc(\lambda)\eqdef\sum_{i=1}^{m-1} \lambda_i e_i+ (1-\sum_{i=1}^{m-1} \lambda_i)e_m$ and 
$$
F_{\KL_{\Delta_m}}(\lambda)={F_\KL^+}(\thetafunc(\lambda)).
$$
Notice that the Bregman generator $F_{\KL_{\Delta_m}}$ of the sub-dimensional Bregman divergence
$B_{F_\KL^+}$  is not separable because of the normalization constraint: $\sum_{i=1}^{m}\lambda_i=1$.
Figure~\ref{fig:subKLD} illustrates the extended KLD (separable Bregman divergence) defined on the positive orthant and its restriction on the $(k-1)$-dimensional simplex $\Delta_k$ yielding a sub-dimensional non-separable Bregman divergence.

Here, the simplex $\Delta_2$ is considered as a mixture family with Bregman generator 
$F(\lambda)=\sum_{i=1}\lambda_i\log \lambda_i=-S(\lambda_1,\lambda_2,1-(\lambda_1+\lambda_2))$.
Alternatively, the simplex $\Delta_2$ can also be considered as an exponential family with  Bregman generator 
$F(\alpha)=\LSE_+(\alpha_1,\alpha_2)$  (dual generator) where $\LSE_+(\alpha_1,\alpha_2)=\LSE(0,\alpha_1,\alpha_2)=\log(1+e^{\alpha_1}+e^{\alpha_2})$ is a strictly convex function (the log-sum-exp function LSE is only a convex function~\cite{boyd2004convex}).
The parameters $\alpha_i=\log\frac{\lambda_i}{\lambda_3}$ are called the natural parameters of the exponential family while the parameters $\lambda_i$ are 
called the dual moment parameters (for $i\in\{1,2\}$).
\end{Example}

\begin{figure}
\centering
\includegraphics[width=0.75\columnwidth]{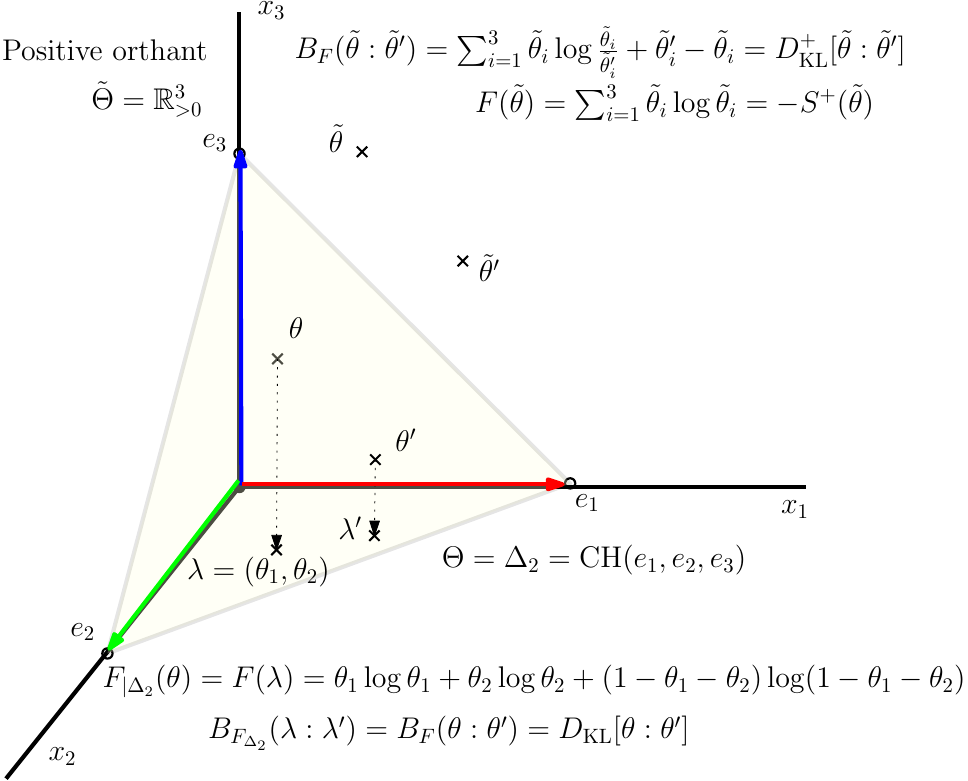}

\caption{The discrete Kullback-Leibler divergence between two probability mass functions of the 2D simplex $\Delta_2$ sitting in $\bbR_{>0}^3$ is an example of sub-dimensional Bregman divergence of the extended KLD defined between positive arrays of $\tTheta=\bbR_{>0}^3$.\label{fig:subKLD}}
\end{figure}

\subsection{Curved Bregman centroids}\label{sec:cbc}

\begin{figure}
\centering
\includegraphics[width=\textwidth]{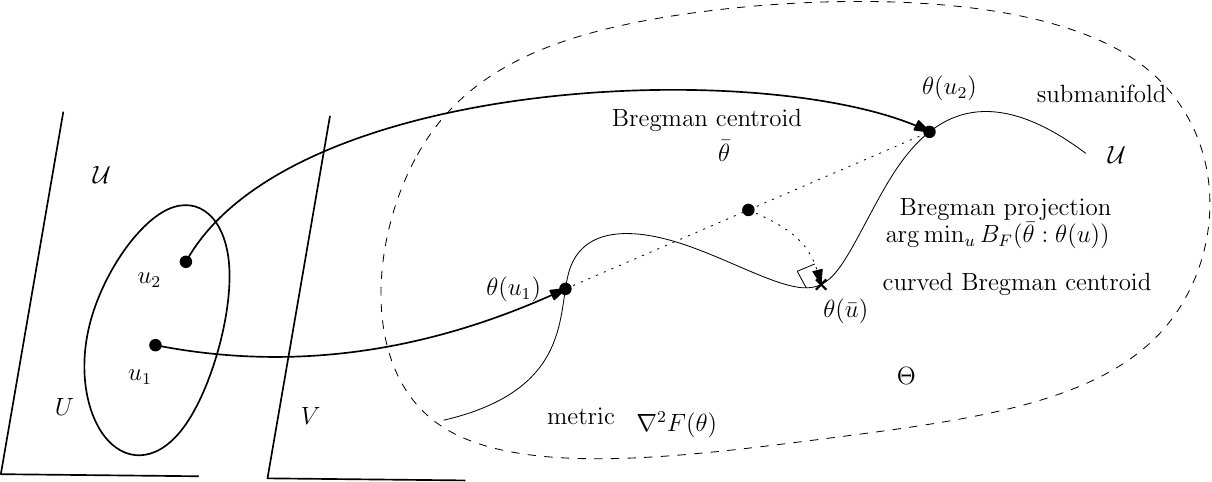}  
\caption{The curved Bregman centroid $\theta(\bar{u})$ amounts to the right Bregman projection of the unconstrained Bregman centroid $\bar\theta$ onto the subspace $\calU$.}
\label{fig:curvedBregmanCentroid}
\end{figure}

We show that the right curved Bregman centroid is characterized by the Bregman projection~\cite{nielsen2018information} of the Bregman centroid onto $\calU$:

\begin{BoxedTheorem}\label{thm:proj}
Let $\theta_i=\theta(u_i)$'s be $n$ weighted parameters of $\calU$ with weight vector $w\in\Delta_{n-1}$ (the $(n-1)$-dimensional standard simplex).
Then the barycenter in $\calU$ with respect to the curved Bregman divergence amounts to the Bregman projection of the center of mass $\bar\theta=\sum_i w_i\theta_i$
 (right Bregman barycenter) onto $\calU$:
\begin{equation}
\arg \min_{u\in\calU} \sum_{i=1}^n w_i\, B_F(\theta_i:\theta(u)) = \arg\min_{u\in\calU}\, B_F(\bar\theta:\theta(u)).
\end{equation}
\end{BoxedTheorem}

\begin{proof}
\begin{eqnarray*}
\min_{u\in\calU} \sum_{i=1}^n w_i\, B_F(\theta_i:\theta(u))  &=& \sum_{i=1}^n w_i \left(F(\theta_i)-F(\theta(u)) -\inner{\theta_i-\theta(u)}{\nabla F(\theta(u))} \right),\\
&\equiv & -F(\theta(u)) -\inner{\bar\theta-\theta(u)}{\nabla F(\theta(u))},\\
&\equiv & F(\bar\theta)-F(\theta(u)) -\inner{\bar\theta-\theta(u)}{\nabla F(\theta(u))},\\
&=& B_F(\bar\theta:\theta(u)).
\end{eqnarray*}
\end{proof}


Figure~\ref{fig:curvedBregmanCentroid} illustrates the right Bregman projection with orthogonality defined with respect to the Hessian metric~\cite{shima2007geometry,shima2013geometry} $\nabla^2 F(\theta)$ to find the curved Bregman centroid. Theorem~\ref{thm:proj} generalizes the result of Amari and Nagaoka~\cite{amari2000methods} of the maximum likelihood estimators on curved exponential families. 

\begin{Example}
Let us consider the curved circle Euclidean divergence of Example~\ref{ex:cEuclBD}:
Consider a set $u_1,\ldots, u_n$ of reals which yield the corresponding parameters $\theta(u_1)=(\cos(u_1),\sin(u_1))$, $\ldots$, $\theta(u_n)=(\cos(u_n),\sin(u_n))$.
The right Bregman centroid with respect to $B_F$ is the center of mass:
$$
\bar\theta=\frac{1}{n}\sum_{i=1}^n (\cos(u_i),\sin(u_i))=:(\bar{c},\bar{s}),
$$
where $\bar{c}=\sum_{i=1}^n \frac{1}{n}\cos(u_i)$ and  $\bar{s}=\sum_{i=1}^n \frac{1}{n}\cos(u_i)$.

The right curved centroid with respect to $u_1,\ldots,u_n$ is defined by
\begin{eqnarray*}
\arg\min_u && \sum_{i=1}^n \frac{1}{n}\, B_{F,c}(u_i:u),\\
 && \sum_{i=1}^n \frac{1}{n}\, \sum_{i=1}^n (1-\cos(u)\cos(u_i)-\sin(u)\sin(u_i)).
\end{eqnarray*}

The minimization problem is equivalent to the following maximization problem:
$$
\max_u  \bar{c}\cos(u)+\bar{s}\sin(u).
$$

The extremal point is found by setting the derivative to zero:
$$
\bar{c}\sin(u)-\bar{s}\cos(u)=0.
$$

Thus we get the solution $u=\arctan \frac{\bar{s}}{\bar{c}}$,

Let us check that
$$
\arg\min_u \sum_{i=1}^n \frac{1}{n}\, B_F(\theta_i:\theta(u))=\arg\min_u  B_F(\bar\theta:\theta(u)),
$$
where $\bar\theta=(\bar{c},\bar{s})$.

Notice that the curved BD centroid theorem recovers the fact that the cosine (dis)similarity centroid is the center of mass reprojected onto the unit sphere.
\end{Example}

\begin{Remark}
Although the full Bregman centroid $\bar\theta$ is unique, the curved Bregman centroid(s) may not be unique when the projection is not unique: e.g., consider $\theta_i$ on a unit circle with $\bar\theta$ coinciding with the circle center (see also Fisher's normal circle model~\cite{efron1978assessing}). Then the projection can be any point of the circle.
That is, let $B_F(\theta_1:\theta_2)=\frac{1}{2}\inner{\theta_2-\theta_1}{\theta_2-\theta_1}=\frac{1}{2}\|\theta_1-\theta_2\|^2$ be the Bregman divergence obtained for $F(\theta)=\frac{1}{2}\inner{\theta}{\theta}$. Consider $U=\bbR$ and $\calU=[0,2\pi)\subset U$ with $\theta(u)=(\cos\theta,\sin\theta)\in\bbR^2=:V$.
We have $B_F(u_1:u_2)=\frac{1}{2}((\cos u_2-\cos u_1)^2+(\sin u_2-\sin u_1)^2)$. Let $u_1=0$ and $u_2=\pi$ with $\theta_1=(1,0)$ and $\theta_2=(-1,0)$.
We have $\bar\theta=(0,0)$ and $\arg\min_{u} B_F((0,0):\theta(u))=\arg\min_{u} \frac{1}{2}\|\cos^2\theta+\sin^2\theta\|^2=\arg\min_{u} \frac{1}{2}=\calU$, i.e., the curved Bregman centroid is the full circle $\{\theta(u)\st u\in\calU\}$.
\end{Remark}

\begin{Corollary}[\cite{amari2000methods}]
Let $x_1,\ldots, x_n$ be $n$ i.i.d. observations of a curved exponential family $\{p_{\theta(u)}(x)\st u\in\Theta_\calU\}$ with sufficient statistic vector $t(x)$  and cumulant function $F(\theta)$.
Then the maximum likelihood estimator  amounts to a curved Bregman centroid on $t(x_1),\ldots, t(x_n)$ with respect to the convex conjugate $F^*(\eta)$.
That is, the right dual  Bregman projection of the observed point $\bar t=\frac{1}{n} \sum_{i=1}^n t(x_i)$ onto the moment submanifold $\{\eta(u)=\nabla F(\theta(u))\}$:
$$
\eta(\hat u)=\min_{u\in\calU} B_{F^*}(\bar t:\eta(u)).
$$
\end{Corollary}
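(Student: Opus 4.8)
The plan is to reduce the maximum-likelihood problem to the minimization of a dual Bregman divergence and then invoke Theorem~\ref{thm:proj}. First I would write the density of the ambient full exponential family in canonical form, $p_\theta(x)=\exp(\inner{\theta}{t(x)}-F(\theta)+k(x))$, so that the average log-likelihood of the i.i.d.\ sample becomes $\frac1n\sum_{i=1}^n \log p_{\theta(u)}(x_i)=\inner{\theta(u)}{\bar t}-F(\theta(u))+\frac1n\sum_i k(x_i)$, where $\bar t=\frac1n\sum_i t(x_i)$ is the observed moment. Since the last term is independent of $u$, the MLE $\hat u$ maximizes $g(u)\eqdef\inner{\theta(u)}{\bar t}-F(\theta(u))$ over $\calU$; this is where sufficiency of $t(x)$ enters, collapsing the whole sample into $\bar t$.

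The key step is to recognize $-g(u)$ as a dual Bregman divergence up to an additive constant. Using the Legendre-type identities recalled in the introduction, namely $\eta(u)=\nabla F(\theta(u))$, $\theta(u)=\nabla F^*(\eta(u))$ and $F^*(\eta(u))=\inner{\eta(u)}{\theta(u)}-F(\theta(u))$, I would expand
\[
B_{F^*}(\bar t:\eta(u))=F^*(\bar t)-F^*(\eta(u))-\inner{\bar t-\eta(u)}{\nabla F^*(\eta(u))}.
\]
Substituting $\nabla F^*(\eta(u))=\theta(u)$ and the conjugate value of $F^*(\eta(u))$, the two $\inner{\eta(u)}{\theta(u)}$ terms cancel and one is left with $B_{F^*}(\bar t:\eta(u))=F^*(\bar t)+F(\theta(u))-\inner{\bar t}{\theta(u)}=F^*(\bar t)-g(u)$. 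As $F^*(\bar t)$ does not depend on $u$, minimizing $B_{F^*}(\bar t:\eta(u))$ over $\calU$ is exactly maximizing $g(u)$, i.e.\ the MLE; this proves $\hat u=\arg\min_{u\in\calU}B_{F^*}(\bar t:\eta(u))$.

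Finally I would identify this right dual Bregman projection with the curved Bregman centroid. Applying Theorem~\ref{thm:proj} with generator $F^*$, data points $\eta_i=t(x_i)$ viewed in the moment (dual) domain $H$, uniform weights $w_i=\frac1n$ (so that the unconstrained right $B_{F^*}$-barycenter is the center of mass $\bar t$), and curved subspace the moment submanifold $\{\eta(u)=\nabla F(\theta(u))\st u\in\calU\}$, gives $\arg\min_{u}\frac1n\sum_i B_{F^*}(t(x_i):\eta(u))=\arg\min_u B_{F^*}(\bar t:\eta(u))$. Combined with the previous paragraph, the MLE is the right curved Bregman $F^*$-centroid of the sufficient statistics, as claimed. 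The only delicate points are the bookkeeping of the Legendre conjugate identities (and the sign conventions of the reference-dual divergence $D^\drev$), together with the implicit regularity assumption that $\bar t$ lies in the open moment domain $H$ so that the dual projection is well defined; the substantive content is the one-line cancellation turning $-g$ into $B_{F^*}(\bar t:\cdot)$ up to the constant $F^*(\bar t)$.
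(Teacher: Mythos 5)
Your proposal is correct and follows essentially the same route as the paper: reduce the maximum-likelihood problem over the curved family to the minimization of the right dual Bregman divergence $B_{F^*}(\bar t:\eta(u))$, and identify this with the curved Bregman centroid of the sufficient statistics via Theorem~\ref{thm:proj}. The only difference is one of self-containedness: where the paper simply cites the known equivalence $\max_{\theta} \prod_i p_\theta(x_i) \equiv \min_\eta \sum_i B_{F^*}(t(x_i):\eta)$ for full exponential families, you derive it explicitly through the Legendre cancellation $B_{F^*}(\bar t:\eta(u))=F^*(\bar t)-\inner{\theta(u)}{\bar t}+F(\theta(u))$, which is a legitimate (and arguably clearer) way to fill in the cited step.
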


\begin{proof}
For regular full exponential families (i.e., $\theta(u)$ the identity function), using the duality between regular exponential families and regular Bregman divergences~\cite{BD-2005}, we have~\cite{nielsen2012k}:
$$
\max_{\theta\in\Theta} \prod_{i=1}^n p_\theta(x_i) \equiv \min_{\eta} \sum_{i=1}^n B_{F^*}(t(x_i):\eta).
$$

Let $\eta(u)=\nabla F(\theta(u))$. It follows that we have
$$
\max_{u} \prod_{i=1}^n p_{\theta(u)}(x_i) \equiv \min_{u} B_{F^*}(\bar t:\eta(u)),
$$
where $\bar t=\frac{1}{n} \sum_{i=1}^n t(x_i)$ is the observed point~\cite{amari2000methods}.
\end{proof}

Let $I_F(\theta_1,\ldots,\theta_n;\theta) = \sum_{i=1}^n \frac{1}{n} B_F(\theta_i:\theta)$
The Bregman information~\cite{BD-2005} is defined as $I_F(\theta_1,\ldots,\theta_n)=I_F(\theta_1,\ldots,\theta_n;\bar\theta)$ where $\bar\theta= \sum_{i=1}^n \frac{1}{n}\theta_i$ is the center of mass.
We have the following bias-variance decomposition~\cite{BD-2005,pfau2025generalized,heskes2025bias}:
$$
I_F(\theta_1,\ldots,\theta_n;\theta) - I_F(\theta_1,\ldots,\theta_n;\bar\theta) =  B_F(\bar\theta:\theta)\geq 0.
$$
Hence, the Bregman centroid $\bar\theta$ minimizes the average Bregman divergence and its minimal value corresponds to the Bregman information.
The Bregman bias-variance decomposition is thus:
$$
I_F(\theta_1,\ldots,\theta_n;\theta) = I_F(\theta_1,\ldots,\theta_n;\bar\theta) + B_F(\bar\theta:\theta).
$$

Let $u^*$ be one minimizer of $\min_{u\in\calU}\, B_F(\bar\theta:\theta(u))$.
Then we can apply the Bregman bias-variance decomposition to get
$$
I_F(\theta_1,\ldots,\theta_n;\theta(u^*)) = B_F(\bar\theta:\theta(u^*))+ I_F(\theta_1,\ldots,\theta_n).
$$

The left-hand-side may be called the curved Bregman information. It decomposes as the Bregman information term (induced by the Bregman centroid) 
plus the term $B_F(\bar\theta:\theta(u^*))$ which indicates the Bregman loss when considering a constraint $\calU$ instead of the full parameter space.

Interestingly, the Pythagorean theorem also holds for pointwise Bregman divergences~\cite{jones2002general}.

Now, let us report now  some examples of curved Bregman divergences.

\subsection{Symmetrized Bregman divergences as curved Bregman divergences}\label{sec:symbd}

Bregman divergences are proven asymmetric except when the generator is quadratic (yielding a generalized Euclidean divergence).
See~\cite{nielsen2007bregman} (Lemma~2) and~\cite{jones2002general} (Theorem~2).

\begin{Example}
When $F(\theta)=\frac{1}{2}\theta^\top Q\theta=F_Q(\theta)$ for a positive-definite matrix $Q\succ 0$, we have the convex conjugate $F^*(\eta)=\frac{1}{2}\eta^\top Q^{-1}\eta$ (with $Q^{-1}\succ 0$). We have 
$\eta_i=Q^{-1}\theta_i$ and $\eta_i=Q\theta_i$.
It follows that $\bar\theta=\sum_{i=1}^n w_i\theta_i=Q^{-1}\bar\eta$
 and
$\bar\eta=\sum_{i=1}^n w_i\eta_i=Q\bar\theta$.
Thus we check that the information radii coincide when dealing with squared Mahalanobis Bregman divergences:
\begin{eqnarray}
I_{F_Q}(\theta_1,\ldots,\theta_n) &=& \sum_{i=1}^n w_i \frac{1}{2}\theta_i^\top Q\theta_i - \frac{1}{2}\bar\theta^\top Q\bar\theta,\\
&=&  \sum_{i=1}^n w_i \frac{1}{2}(Q^{-1}\eta_i)^\top Q (Q^{-1}\eta_i) - \frac{1}{2} (Q^{-1}\bar\eta)^\top Q (Q^{-1}\bar\eta),\\
&=& \sum_{i=1}^n w_i \eta_i^\top Q^{-1}\eta_i - \frac{1}{2} \bar\eta Q^{-1}\bar\eta,\\
&=& I_{F_{Q^{-1}}}(\eta_1,\ldots,\eta_n).
\end{eqnarray}
\end{Example}

Consider the symmetrized Bregman divergence for Legendre-type generators $F$:
\begin{eqnarray}
S_F(\theta_1,\theta_2) &:=& B_F(\theta_1:\theta_2) + B_F(\theta_2:\theta_1),\label{eq:jbd}\\
 &=&  B_F(\theta_1:\theta_2) + B_{F^*}(\nabla F(\theta_1):\nabla F(\theta_2)),\\
 &=& B_{F_\xi}(\xi(\theta_1):\xi(\theta_2)),
\end{eqnarray} 
where $\theta_i\in\Theta\subset\bbR^m$, $\xi(\theta):=(\theta,\nabla F(\theta))\in\bbR^{2m}$
and 
$$
F_\xi(\theta):=F(\theta)+F^*(\nabla F(\theta)),
$$ 
where $F^*(\eta)=\inner{\theta(\eta)}{\eta}-F(\theta(\eta))$ denotes  the convex conjugate of $F(\theta)$ with $\theta(\eta)=(\nabla F)^{-1}(\eta)$.

The symmetrized Bregman divergence $S_F$ (also called Jeffreys-Bregman divergence~\cite{nielsen2019jensen}) is a curved Bregman divergence $B_{F_\xi}$ with respect to the generator $F_\xi$ and
 the (usually non-linear) subspace $\calU=\{u(\theta)=(\theta,\nabla F(\theta)) \st \theta\in\Theta\}$.
Thus the symmetrized Bregman centroid~\cite{nielsen2009sided} of a weighted parameter set $\{\theta_i\}$ is the projection of the centroid $\bar\xi=(\sum w_i \theta_i,\sum w_i \nabla F(\theta_i))$ onto $\calU=\{(\theta,\nabla F(\theta)) \st \theta\in\Theta\}$:
\begin{eqnarray*}
\arg \min_{\theta\in\Theta} \sum_i w_i\, S_F(\theta_i,\theta)
&=&
\arg \min_{\theta\in\Theta} \sum_i w_i\, B_{F_\xi}(\xi_i:\xi(\theta)),\\
 &=& \arg\min_{u\in\calU} B_{F_\xi}(\bar\xi:u),
\end{eqnarray*}
where $\xi_i=(\theta_i,\eta_i=\nabla F(\theta_i))$ and $\bar\xi= \sum_i w_i\xi_i=(\bar\theta,\bar\eta)$.

Thus the symmetrized Bregman centroid~\cite{nielsen2009sided} is a curved Bregman centroid, and we have:
$$
\min_{\theta\in\Theta} \sum_{i=1}^n w_i\, B_{F_\xi}(\xi_i:\xi(\theta))  \equiv \min_{\theta\in\Theta} B_{F_\xi}(\bar\xi:\xi(\theta)),
$$
where $\bar\xi=(\bar\theta= \sum_{i=1}^n w_i\theta_i,\bar\eta= \sum_{i=1}^n w_i\eta_i)$.
Since $B_{F_\xi}(\bar\xi:\xi(\theta))=B_F(\bar\theta:\theta)+B_{F^*}(\bar\eta:\eta) = B_F(\bar\theta:\theta)+B_F(\theta:\ubartheta)$ with
$\ubartheta=\nabla F^*(\bar\eta)$, it follows that we have
$$
\min_{\theta\in\Theta} \sum_{i=1}^n w_i\, B_{F_\xi}(\xi_i:\xi(\theta))  \equiv \min_{\theta\in\Theta} B_F(\bar\theta:\theta)+B_F(\theta:\ubartheta).
$$

Thus right Bregman projection perspective of symmetrized Bregman centroids let us recover Lemma 4.1 of~\cite{nielsen2009sided}:

\begin{Corollary}[Lemma 4.1 of~\cite{nielsen2009sided}]\label{sec:lemmasymBD}
The symmetrized Bregman barycenter of  $n$ weighted parameters $\theta_1,\ldots, \theta_n$ weighted by $w=(w_1,\ldots,w_n)\in\Delta_{n}$ (standard simplex) minimizes 
$\min_{\theta\in\Theta} B_F(\bar\theta:\theta)+B_F(\theta:\ubartheta)$ where $\bar\theta=\sum_i w_i\theta_i$ (right Bregman centroid) and $\ubartheta=\sum_i w_i \nabla F(\theta_i)$ (left Bregman centroid).
\end{Corollary}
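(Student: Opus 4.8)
The plan is to read the claim as an immediate specialization of the curved-centroid characterization of Theorem~\ref{thm:proj} to the particular curved Bregman divergence furnished by the symmetrization. First I would fix the ambient generator to be the separable convex function $G(\theta,\eta)\eqdef F(\theta)+F^*(\eta)$ on $\Theta\times H\subset\bbR^{2m}$, and take the (non-affine) subspace $\calU=\{\xi(\theta)=(\theta,\nabla F(\theta))\st\theta\in\Theta\}$, which is exactly the graph of the gradient map. Restricting $G$ to $\calU$ recovers the generator $F_\xi(\theta)=F(\theta)+F^*(\nabla F(\theta))$, so that $S_F$ is the curved Bregman divergence $B_G$ carried on $\calU$. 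The first step is therefore to verify the splitting
$$
B_G(\xi(\theta_1):\xi(\theta_2))=B_F(\theta_1:\theta_2)+B_{F^*}(\eta_1:\eta_2),\qquad \eta_i=\nabla F(\theta_i),
$$
which holds because $G$ is separable, so its Bregman divergence is the sum of the two block divergences; combined with the reference-duality identity $B_{F^*}(\eta_1:\eta_2)=B_F(\theta_2:\theta_1)$ this reproduces $S_F(\theta_1,\theta_2)=B_F(\theta_1:\theta_2)+B_F(\theta_2:\theta_1)$.

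Next I would apply Theorem~\ref{thm:proj} verbatim, with $B_G$ playing the role of the full Bregman divergence. It gives
$$
\arg\min_{\theta\in\Theta}\sum_{i=1}^n w_i\,B_G(\xi(\theta_i):\xi(\theta))=\arg\min_{\theta\in\Theta}B_G(\bar\xi:\xi(\theta)),
$$
where $\bar\xi=\sum_i w_i\xi(\theta_i)=(\bar\theta,\bar\eta)$ is the ambient center of mass, with $\bar\theta=\sum_i w_i\theta_i$ and $\bar\eta=\sum_i w_i\nabla F(\theta_i)$. The left-hand side is precisely $\arg\min_{\theta}\sum_i w_i\,S_F(\theta_i,\theta)$, the symmetrized Bregman barycenter.

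It then remains to rewrite the projection objective in primal terms. Using the splitting above at the off-diagonal argument $\bar\xi$, I obtain $B_G(\bar\xi:\xi(\theta))=B_F(\bar\theta:\theta)+B_{F^*}(\bar\eta:\nabla F(\theta))$. Applying the reference duality $B_{F^*}(\bar\eta:\eta)=B_F(\theta:\ubartheta)$ with $\ubartheta=\nabla F^*(\bar\eta)$ (the primal point dual to the dual center of mass, i.e.\ the left Bregman centroid) converts the second summand, yielding the asserted form $\min_{\theta\in\Theta}B_F(\bar\theta:\theta)+B_F(\theta:\ubartheta)$.

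I expect the only genuine subtlety to be bookkeeping with the orientation of the reference duality together with the identification of $\ubartheta$: the statement writes $\ubartheta=\sum_i w_i\nabla F(\theta_i)$, which literally is the dual center of mass $\bar\eta$, whereas the point entering the primal objective is its Legendre dual $\nabla F^*(\bar\eta)$. One must confirm that these describe the same left-centroid point under the primal/dual correspondence, which is exactly the content of the preceding identity $\ubartheta=\nabla F^*(\bar\eta)$. Everything else is a direct substitution, so no separate optimization argument is needed beyond Theorem~\ref{thm:proj}.
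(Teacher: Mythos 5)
Your proof is correct and takes essentially the same route as the paper: the paper likewise realizes $S_F$ as the curved Bregman divergence of the separable generator $F(\xi_1)+F^*(\xi_2)$ on the graph $\calU=\{(\theta,\nabla F(\theta))\st\theta\in\Theta\}$, invokes Theorem~\ref{thm:proj} to reduce the barycenter problem to the right Bregman projection of $\bar\xi=(\bar\theta,\bar\eta)$ onto $\calU$, and then splits $B_{F_\xi}(\bar\xi:\xi(\theta))=B_F(\bar\theta:\theta)+B_{F^*}(\bar\eta:\nabla F(\theta))$ and converts the second term to $B_F(\theta:\ubartheta)$ by Legendre duality. Your closing observation is also on point: the paper's own derivation uses $\ubartheta=\nabla F^*(\bar\eta)$, so the corollary's literal writing of $\ubartheta=\sum_i w_i\nabla F(\theta_i)$ is a notational slip (it is $\bar\eta$, the dual center of mass) that your reading resolves in the intended way.
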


The right-hand-side amounts to minimize equivalently
$$
\min_{\theta\in\Theta} \inner{\theta-\bartheta}{\nabla F(\theta)}-\inner{\theta}{\nabla F(\ubartheta)}.
$$

\begin{Remark}
There are many ways to symmetrize an asymmetric divergence $D(\theta_l:\theta_r)$ beyond taking the arithmetic mean of the forward and reverse divergences of Eq.~\ref{eq:jbd}.
For example, we may symmetrize as follows:
\begin{eqnarray*}
D_J(\theta_l,\theta_r) &=& D(\theta_l:\theta_r)+D(\theta_r:\theta_l) = \JD(\theta_r,\theta_l),\\
D_\JS(\theta_l,\theta_r) &=&  D\left(\theta_l:\frac{\theta_l+\theta_r}{2}\right) + D\left(\theta_r:\frac{\theta_l+\theta_r}{2}\right) = D_\JS(\theta_r,\theta_l),\\
D_C(\theta_l,\theta_r) &=& \min_\theta \max\{ D(\theta_l:\theta_r), D(\theta_r:\theta_l)\}.
\end{eqnarray*}
When $D$ is the Kullback-Leibler divergence, $D_J$ is called the Jeffreys divergence, $D_\JS$ the Jensen-Shannon divergence, and $D_C$ the Chernoff information.
Thus when $D=B_F$, we get the Jeffreys-Bregman divergence $S_F$, the Jensen-Bregman divergence $\JB_F$, and the Chernoff-Bregman divergence $C_F$ as follows:
\begin{eqnarray*}
S_F(\theta_l,\theta_r) &=& B_F(\theta_l:\theta_r)+B_F(\theta_r:\theta_l) = S_F(\theta_r,\theta_l),\\
\JB_F(\theta_l,\theta_r) &=&  B_F\left(\theta_l:\frac{\theta_l+\theta_r}{2}\right) + B_F\left(\theta_r:\frac{\theta_l+\theta_r}{2}\right) = \JB_F(\theta_r,\theta_l),\\
C_F(\theta_l,\theta_r) &=& \min_\theta \max\{ B_F(\theta_l:\theta_r), B_F(\theta_r:\theta_l)\}.
\end{eqnarray*}
Notice that $\JB_F(\theta_l,\theta_r)=\frac{F(\theta_l)+F(\theta_r))}{2}-F\left(\frac{\theta_l+\theta_r}{2}\right)=J_F(\theta_l,\theta_r)$ is a Jensen divergence.
Chernoff symmetrization of Bregman divergences was first proposed in~\cite{Chen-BD2-2008}. Metrization of $S_F$ symmetrized Bregman divergences was studied in~\cite{Chen-BD1-2008}.
\end{Remark}

\begin{Remark}
Notice that the right Bregman centroid $\bar\theta=\sum_i w_i\theta_i$ yields the (right) Bregman information~\cite{BD-2005}:
$$
\sum_{i=1}^n w_i\, B_F(\theta_i:\bar\theta) = J_F(\theta_1,\ldots,\theta_n;w_1,\ldots,w_n)
$$
where $J_F(\theta_1,\ldots,\theta_n;w_1,\ldots,w_n)$ is the Jensen diversity index~\cite{burbea1982entropy}:
$$
J_F(\theta_1,\ldots,\theta_n;w_1,\ldots,w_n)=\sum_i w_i F(\theta_i)-F(\sum_i w_i\theta_i),
$$
and the left Bregman centroid $\ubartheta=(\nabla F)^{-1}(\sum_i w_i \nabla F(\theta_i))$ yields the left Bregman information:
$$
\sum_{i=1}^n w_i\, B_F(\ubartheta:\theta_i)=J_{F^*}(\eta_1,\ldots,\eta_n;w_1,\ldots,w_n),
$$
which is usually different from $J_F(\theta_1,\ldots,\theta_n;w_1,\ldots,w_n)$ for non-quadratic Bregman generators $F$.

When $F(\theta)=\frac{1}{2}\theta^\top Q\theta$ for a positive-definite matrix $Q\succ 0$, we have the convex conjugate $F^*(\eta)=\frac{1}{2}\eta^\top Q^{-1}\eta$ (with $Q^{-1}\succ 0$). We have 
$\eta_i=Q^{-1}\theta_i$ and $\eta_i=Q\theta_i$.
It follows that $\bar\theta=\sum_{i=1}^n w_i\theta_i=Q^{-1}\bar\eta$
 and
$\bar\eta=\sum_{i=1}^n w_i\eta_i=Q\bar\theta$.
Thus we check that the information radii coincide when dealing with squared Mahalanobis Bregman divergences:
\begin{eqnarray}
I_F &=& \sum_{i=1}^n w_i \frac{1}{2}\theta_i^\top Q\theta_i - \frac{1}{2}\bar\theta^\top Q\bar\theta,\\
&=&  \sum_{i=1}^n w_i \frac{1}{2}(Q^{-1}\eta_i)^\top Q (Q^{-1}\eta_i) - \frac{1}{2} (Q^{-1}\bar\eta)^\top Q (Q^{-1}\bar\eta),\\
&=& \sum_{i=1}^n w_i \eta_i^\top Q^{-1}\eta_i - \frac{1}{2} \bar\eta Q^{-1}\bar\eta,\\
&=& I_{F^*}.
\end{eqnarray}
\end{Remark}

Notice that the Jeffreys divergence~\cite{nielsen2013jeffreys} between two densities $p_{\theta_1}$ and $p_{\theta_2}$ of a same exponential family with cumulant function $F$ amounts to a symmetrized Bregman divergence $S_F$: 
$$
D_J(p_{\theta_1},p_{\theta_2}):=D_\KL(p_{\theta_1}:p_{\theta_2})+D_\KL(p_{\theta_2}:p_{\theta_1})=S_F(\theta_1,\theta_2).
$$
Thus the Jeffreys centroid of a set of densities $\{p_{\theta_i}\}$ amounts to the symmetrized Bregman centroid of the parameters $\{\theta_i\}$. 

\subsubsection{Symmetrized Kullback-Leibler centroid aka. Jeffreys centroid}\label{sec:jdiv}

Consider $F(\theta)=\theta\log\theta-\theta$ (with $B_F$ corresponding to the Kullback-Leibler divergence extended to positive reals), we end up with the  following problem for the symmetrized extended KLD centroid (extended Jeffreys centroid):
$$
\min_{\theta\in\bbR_{>0}} - (a-\theta)\log\theta-\theta\log g \equiv \min_{\theta\in\bbR_{>0}} \theta\log\frac{\theta}{g}-a\log\theta,
$$
where $a=\sum_{i=1}^n w_i\theta_i$ is the weighted arithmetic mean and $g=\prod_{i=1}^n \theta_i^{w_i}$ is the weighted geometric mean.
Solving the minimization problem yields~\cite{nielsen2013jeffreys}:
\begin{equation}\label{eq:SKL1D}
\theta=\frac{a}{W\left(\frac{a}{g}e\right)}.
\end{equation}

\begin{figure}%
\centering
\begin{tabular}{cc}
\includegraphics[width=0.4\columnwidth]{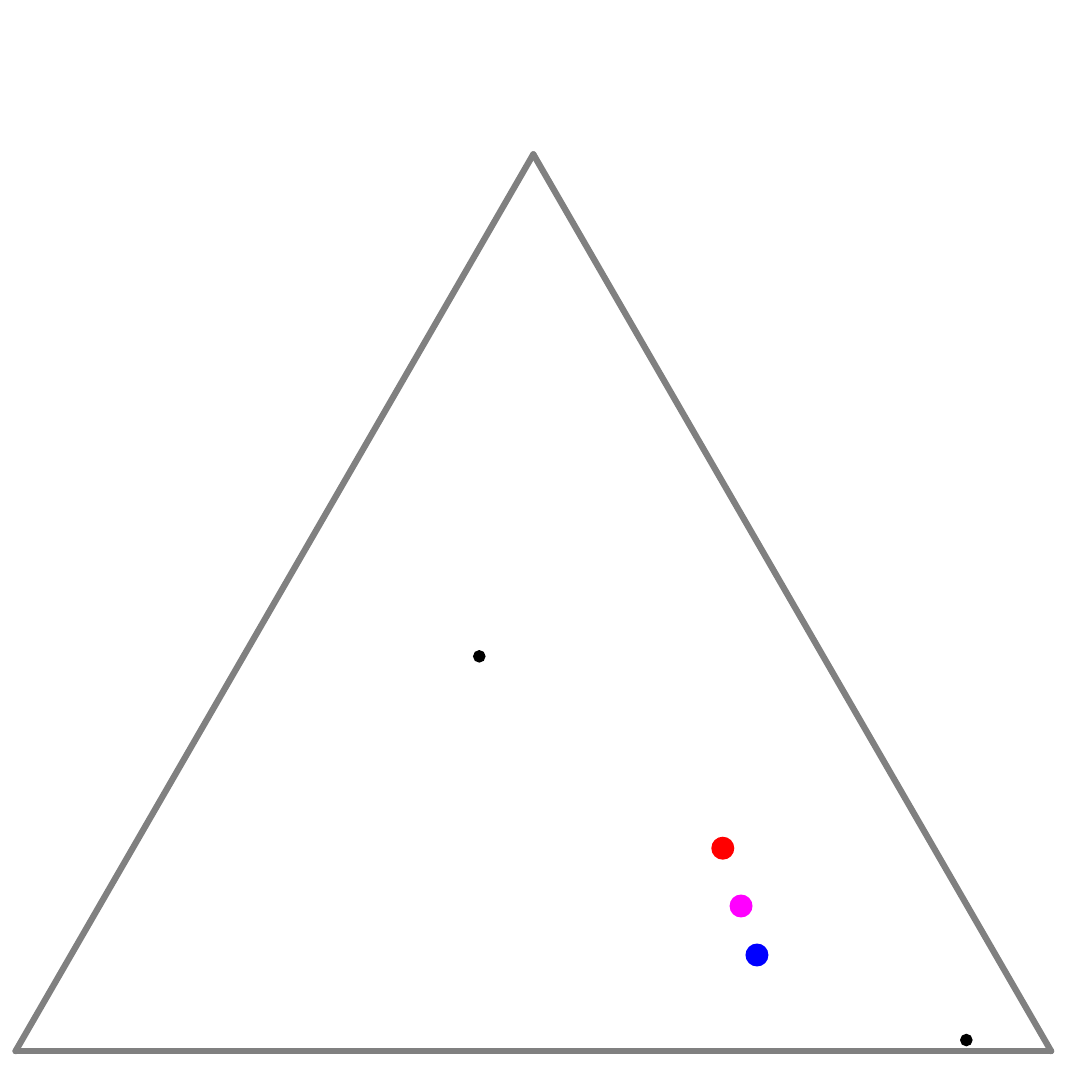} & 
\includegraphics[width=0.4\columnwidth]{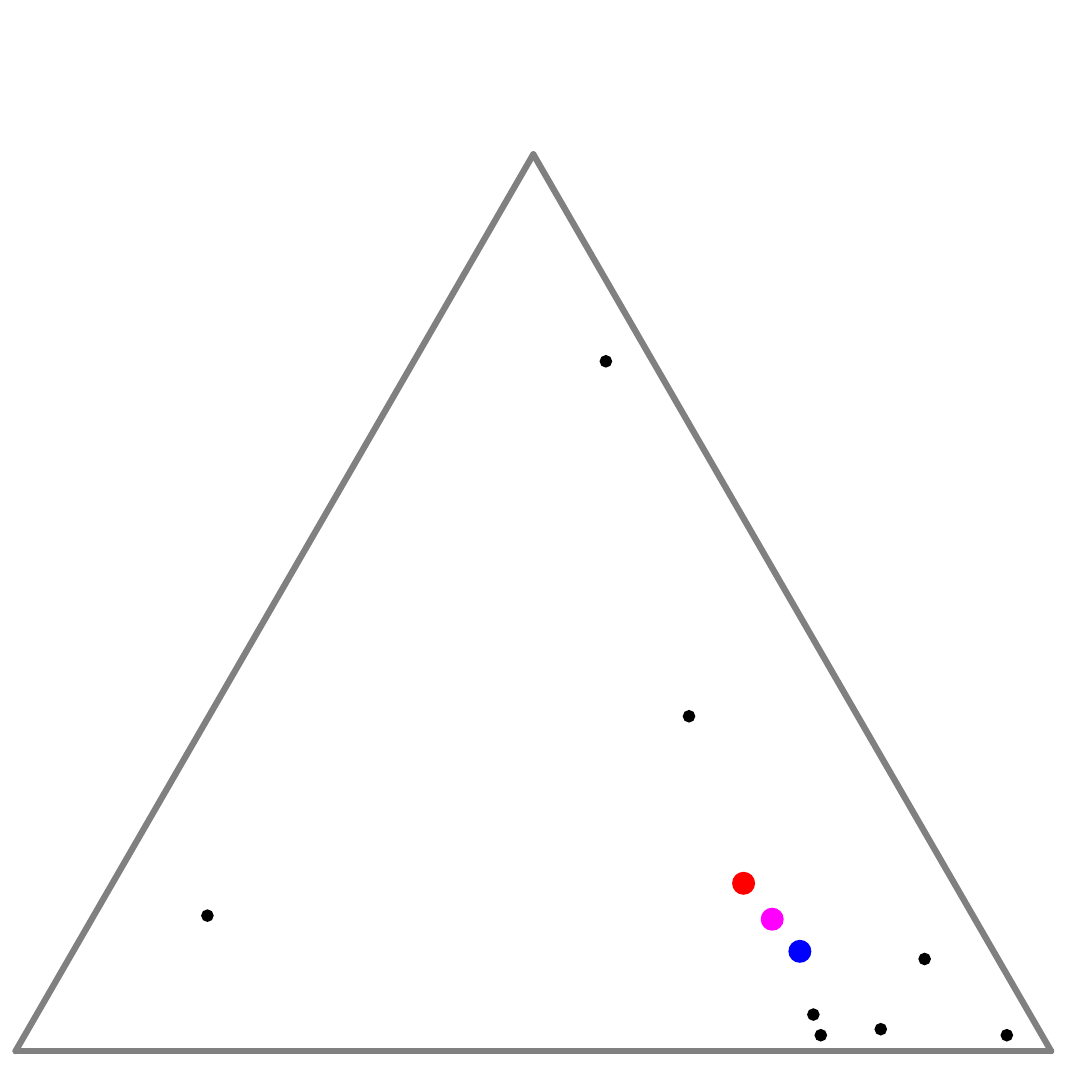}  
\end{tabular}

\caption{Centroids of categorical distributions with $3$ bins (trinomials with one trial):
Left-sided (blue) and right-sided Bregman centroids (red) correspond to the right-sided (blue,  normalized geometric mean)) and left-sided (red, arithmetic mean) Kullback-Leibler divergence, respectively. 
The symmetrized Bregman centroid corresponding to the Jeffreys centroid is shown in purple.
Left: $n=2$, Right: $n=8$.  }%
\label{fig:jeffreyshisto}%
\end{figure}

The Jeffreys centroid of a set of densities belonging to a same exponential family amounts to a symmetrized Bregman centroid.
In particular, the exact characterization of the Jeffreys centroid of categorical distributions (finite discrete distributions) has been reported in~\cite{nielsen2024fast,nielsen2013jeffreys}:
\begin{equation}\label{eq:SKLparam}
\theta_i=\frac{a_i}{W\left(\frac{a_i}{g_i}e^{1+\lambda}\right)},
\end{equation}
where $(a_i)$ and $(g_i)$ are the arithmetic and normalized geometric mean distributions, and $\lambda>0$ is unique such that the Jeffreys centroid is a normalized  probability distribution.

Thus for the particular cumulant function corresponding to the categorical exponential family, we have an exact characterization of the curved Bregman centroid.
Figure~\ref{fig:jeffreyshisto} displays the left-sided and right-sided Bregman centroids which correspond to the right-sided and left-sided Kullback-Leibler divergence, respectively. The symmetrized Bregman centroid corresponding to the Jeffreys centroid is shown in purple.

Notice that the SKL centroid for a set of non-parametric densities has similar solution to Eq.~\ref{eq:SKLparam}.
See~\cite{IG-2014} (Section~4) and related paper on non-parametric $\alpha$-centroids~\cite{amari2007integration}.

\subsubsection{Symmetrized Itakura-Saito/COSH centroid}\label{sec:cosh}

Consider $F(\theta)=-\log\theta$ (Burg negentropy with $B_F$ corresponding to the Itakura-Saito divergence~\cite{DistCentroidSound-1980}), we end up with the following optimization problem for the symmetrized Itakura-Saito centroid (also called COSH centroid):
$$
\min_{\theta\in\bbR_{>0}}  -\frac{\theta-\bartheta}{\theta}+\frac{\theta}{\ubartheta},
$$
where $\bartheta:=a$ is the weighted arithmetic mean and $\ubartheta=\frac{1}{\sum_{i=1}^n w_i \frac{1}{\theta_i}}:=h$, the weighted harmonic mean.
We find that $\theta=\sqrt{ah}=\sqrt{\bar\theta\underline{\theta}}=:g$, the weighted geometric mean.
Figure~\ref{fig:sbdis} illustrates the COSH centroid/symmetrized Itakura-Saito centroid for two parameters $\theta_1=1$ and $\theta_2=2$.

\begin{figure}
\centering

\includegraphics[width=0.8\textwidth]{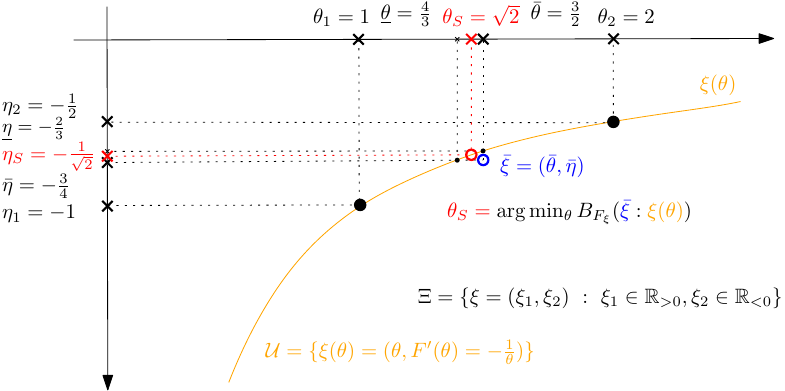}

\caption{Illustration of the symmetrized Bregman centroid for the Burg negentropy $F(\theta)=-\log\theta$ for two parameters $\theta_1=1$ and $\theta_2=2$. 
The geometric mean $\sqrt{\theta_1\theta_2}=\sqrt{2}$ is found as the right-sided Bregman projection $B_{F_\xi}$ of $\bar\xi=(\bar\theta=\frac{3}{2},\bar\eta=-\frac{3}{4})$ onto $\mathcal{U}$.
}\label{fig:sbdis}

\end{figure}

Appendix~\ref{sec:a:sbdis} provides a symbolic code to check that the symmetrized Itakura-Saito centroid is the geometric mean of the sided centroids (i.e., arithmetic and harmonic means).

\subsubsection{Symmetrized Bregman log-det divergence centroid}\label{sec:matCOSH}

\begin{figure}%
\centering
\begin{tabular}{cc}
\includegraphics[width=0.4\columnwidth]{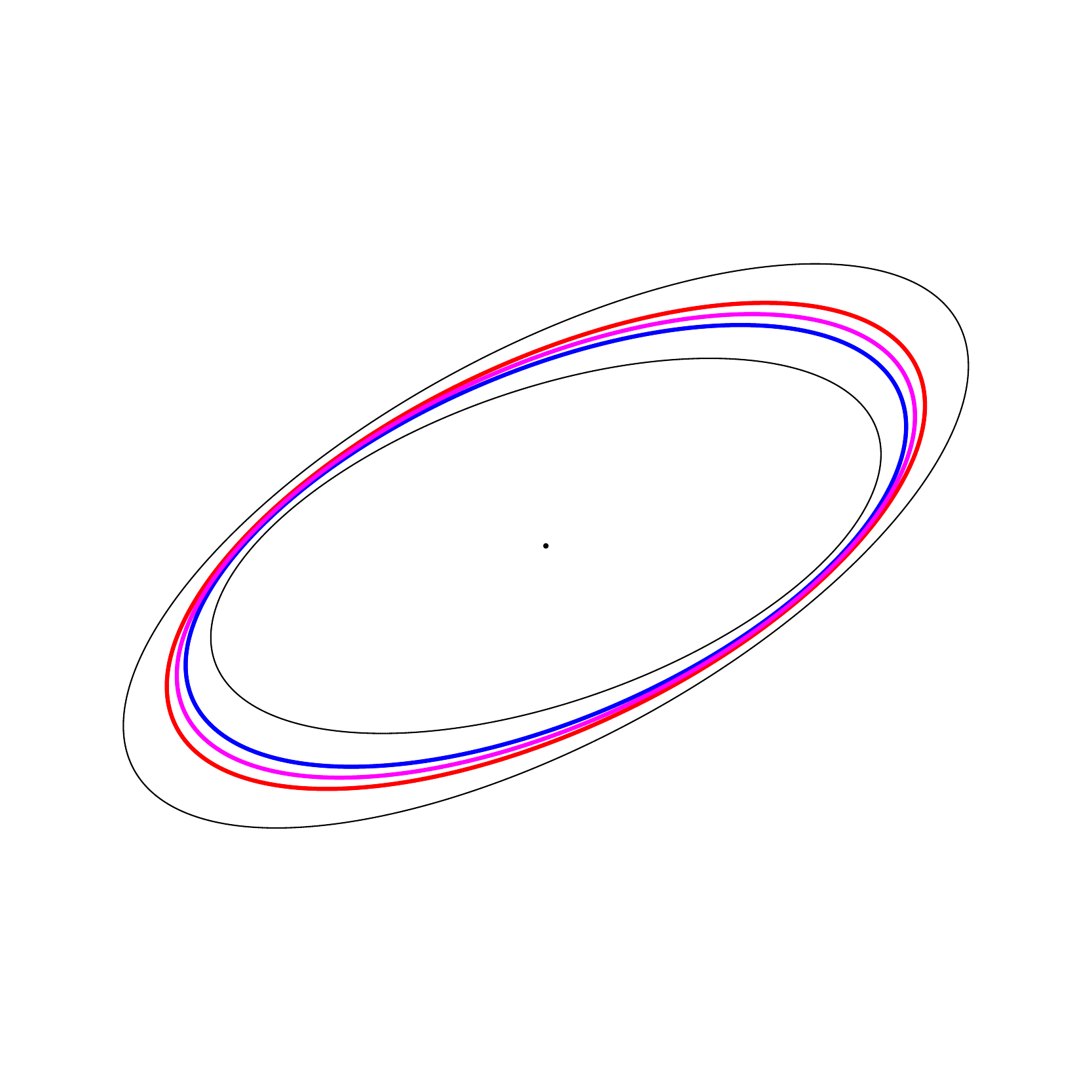} & 
\includegraphics[width=0.4\columnwidth]{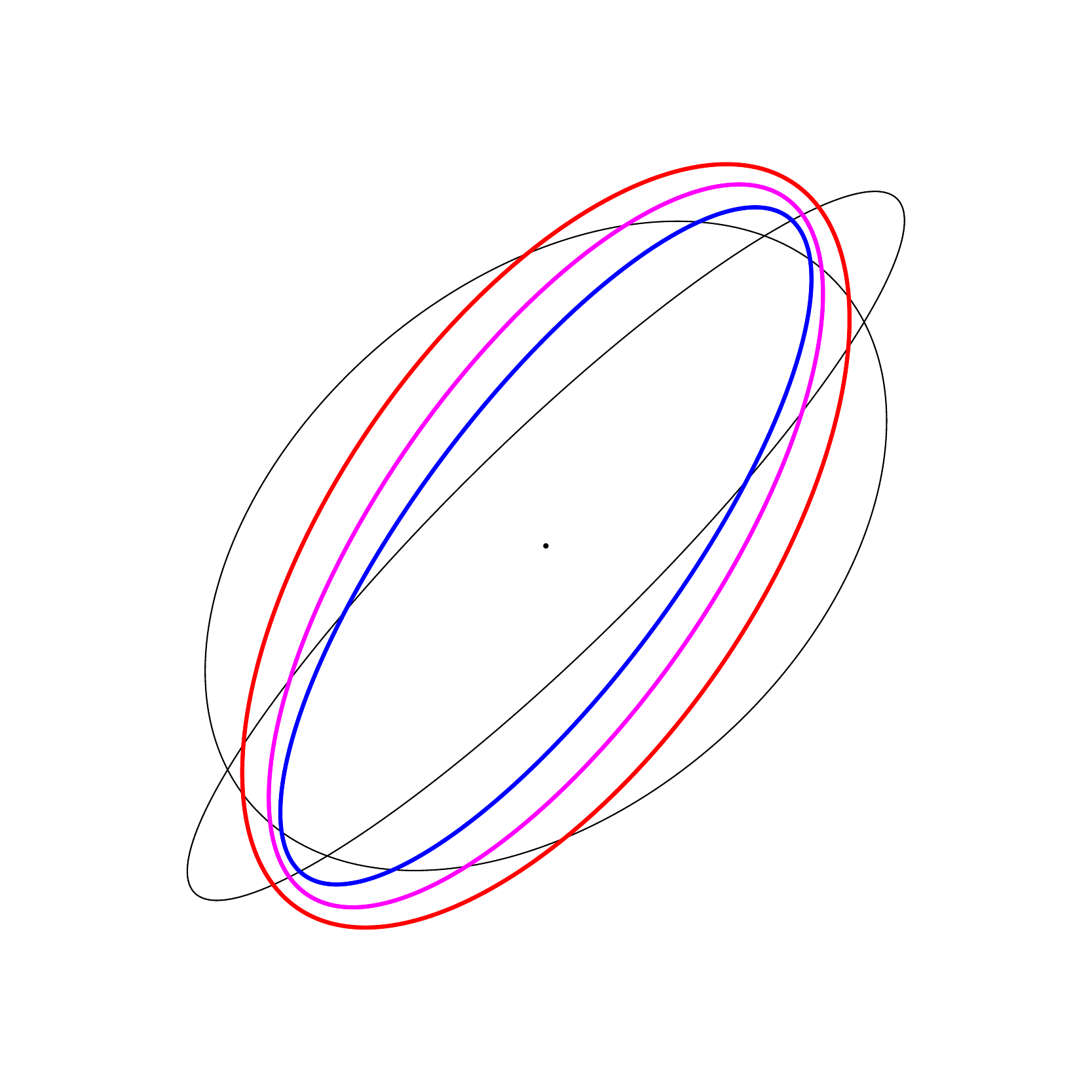}  \\
\includegraphics[width=0.4\columnwidth]{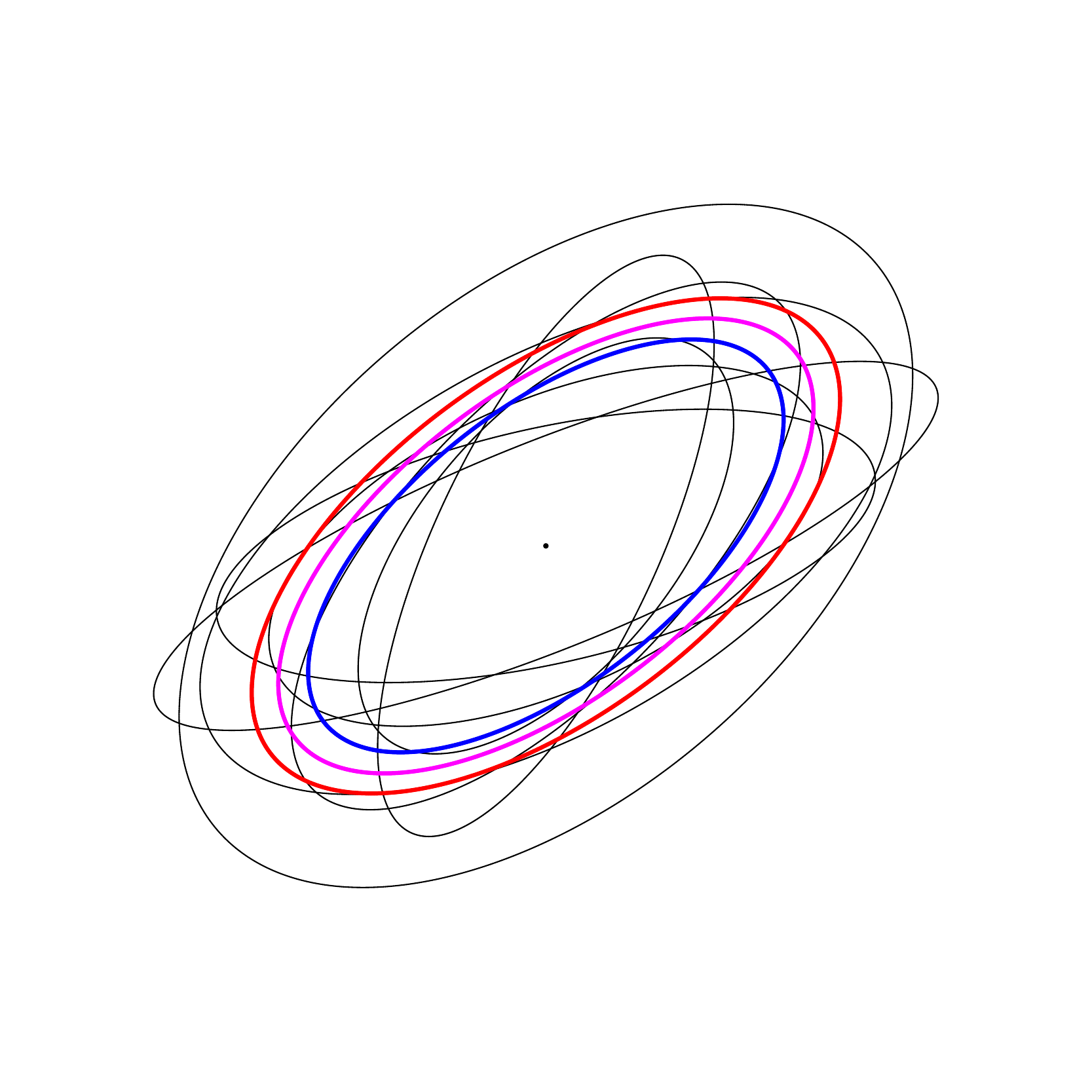} & 
\includegraphics[width=0.4\columnwidth]{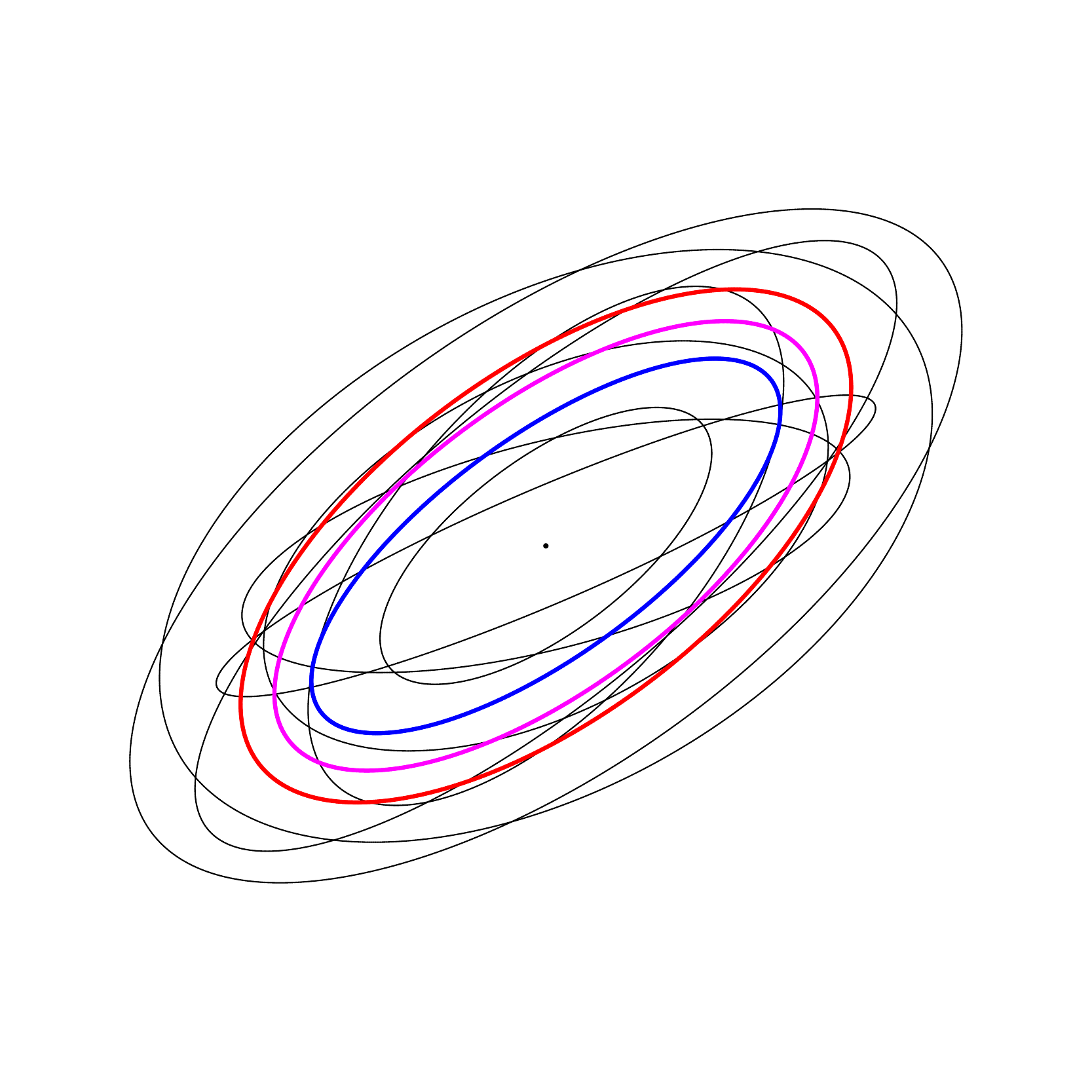}  
\end{tabular}

\caption{Left-sided (blue, harmonic matrix mean), right-sided (red, arithmetic matrix mean), and symmetrized (purple, geometric matrix mean) Bregman centroids for the log det divergence between SPD matrices displayed as corresponding centered ellipsoids. Top row: $n=2$, bottom row $n=8$.}%
\label{fig:logdetcentroid}%
\end{figure}

A matrix $X$ is symmetric positive-definite (SPD) if for all $x\not=0$, we have $x^\top X x>0$. We write $X\succ 0$ for a SPD matrix $X$ where $\succ$ is Lowner partial ordering: $A\succ B \Leftrightarrow A-B\succ 0$.
Consider the Bregman divergence on the open cone $\Theta$ of SPD matrices induced by the generator $F(\theta)=-\log\det\theta$.
This Bregman log det divergence~\cite{giraldi2018optimal} is also called the matrix Itakura-Saito divergence~\cite{ito2015blind} as it generalizes the Itakura-Saito divergence to SPD matrices.
Wang and Vemuri~\cite{SymLogDetKLClosedForm-2005} reported the following closed-form solution for the centroid with respect to the symmetrized matrix Itakura-Saito divergence (matrix COSH divergence) of a set $\theta_1, \ldots, \theta_n$ of SPD matrices:
$$
C= (\barH)^{\frac{1}{2}} \, \left((\barH)^{-\frac{1}{2}} \, \barA \, (\barH)^{-\frac{1}{2}} \right)^{\frac{1}{2}},
$$
where $\barA=\frac{1}{n}\sum_{i=1}^n \theta_i$ is the matrix arithmetic mean and 
$\barH=\left(\frac{1}{n}\sum_{i=1}^n \theta_i^{-1}\right)^{-1}$ is the matrix harmonic mean.

To a $m\times m$ SPD matrix $X\succ 0$, we associate the $m$-dimensional ellipsoid $E(X)=\{x\in\bbR^m\st x^\top X x\leq 1\}$.
See Figure~\ref{fig:logdetcentroid} for some illustrations of the sided and symmetrized logdet Bregman divergences.

Notice that the arithmetic-geometric-harmonic matrix mean inequality holds for SPD matrices~\cite{ando1979concavity,ando1983arithmetic}: This is visualized in Figure~\ref{fig:logdetcentroid} by the fact that the harmonic ellipsoid (blue) is contained in the geometric ellipsoid (purple) which is contained in the arithmetic ellipsoid (red). Indeed, $A\succ B$ iff. $E(B)\subset E(A)$.

When the matrices are 1D (ie., positive reals that commute with themselves), we recover the formula of the scalar geometric mean:
$$
c = (\bar h)^{\frac{1}{2}} \, \left((\bar h)^{-\frac{1}{2}} \, {\bar A} \, (\bar h)^{-\frac{1}{2}} \right)^{\frac{1}{2}} 
= {\bar h}^{\frac{1}{4}}  {\bar a}^{\frac{1}{2}} {\bar h}^{\frac{1}{4}} = \sqrt{{\bar a}{\bar h}} = {\bar g}.
$$


\subsection{Pointwise Bregman centroid}\label{sec:pwBDcentroid}

Consider the pointwise Bregman divergence~\cite{jones2002general} of Example~\ref{ex:pointwiseBD}.
Let $p_1,\ldots, p_n$ be $n$ weighted densities of $\calP_\mu$ with weight vector $w\in\Delta_n$.
We define the centroid constrained to a subset $\calF$ of $\calP_\mu$ to  be the minimizer of 
$$
\min_{p\in\calF} \sum_{i=1}^n w_i B_{f}^\mu(p_i:p).
$$

When $\calF=\calP_\mu$, the unconstrained centroid $c$ is found to be the center of mass~\cite{jones2002general,amari2007integration,IG-2014}:
$$
c= \sum_{i=1}^n w_i p_i.
$$

Consider the case where $\calF=\{p_\theta\st \theta\in\Theta\}$ is a subset of $\calP_\mu$ parametrized by a finite-dimensional parameter.

Then the curved/constrained Bregman centroid is defined as the minimizer of
$$
\min_{\theta\in\Theta} E_\mu(\theta) = \sum_{i=1}^n w_i B_{f}^\mu(p_i:p_\theta).
$$

Similarly to the ordinary Bregman divergence proof, we have
\begin{eqnarray*}
E_\mu(\theta) &=& -\int f(p_\theta(x))\dmu(x)-\int (\bar p-p_\theta(x)) f'(p_\theta(x))\dmu(x),\\
&=& \int \left( f(\bar p(x)) f(p_\theta(x)) - (\bar p-p_\theta(x)) f'(p_\theta(x)) \right) \dmu(x),\\
&=& B_f^\mu(\bar p:p_\theta).
\end{eqnarray*}
Thus the pointwise curved Bregman centroid amounts to the pointwise Bregman divergence projection of the mean density $\bar p$ onto the subfamily $\calF$.
See also Theorem~4 of~\cite{jones2002general}.

\begin{Theorem}[Pointwise curved Bregman centroid]
Let $p_1,\ldots, p_n$ be $n$ densities of $\calP_\mu$ and $\calF=\{p_\theta\st\theta\in\Theta\}$ a parameterized subspace of $\calP_\mu$.
Then the pointwise curved Bregman centroid amounts to the pointwise Bregman divergence projection of  $\bar p$ onto $\calF$:
$$
\arg\min_{\theta\in\Theta} \sum_{i=1}^n w_i B_{f}^\mu(p_i:p_\theta) = \arg\min_{\theta\in\Theta} B_f^\mu(\bar p:p_\theta).
$$

\end{Theorem}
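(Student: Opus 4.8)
The plan is to reduce the statement to the same additive-constant argument used in the proof of Theorem~\ref{thm:proj}, but carried out pointwise in $x$ and then integrated against $\mu$. The essential structural fact is that $B_f^\mu$ is by definition an integral of scalar Bregman divergences, $B_f^\mu(p_i:p_\theta)=\int B_f(p_i(x),p_\theta(x))\,\dmu(x)$, so linearity lets me interchange the finite weighted sum $\sum_i w_i$ with the integration over $\calX$ and analyze the integrand at each fixed $x$ separately. This is exactly the pointwise analogue of the computation already displayed for $E_\mu(\theta)$.

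First I would fix $x\in\calX$ and expand the weighted sum of scalar Bregman divergences using the generator $f$:
$$
\sum_{i=1}^n w_i B_f(p_i(x),p_\theta(x)) = \sum_{i=1}^n w_i f(p_i(x)) - f(p_\theta(x)) - \Big(\sum_{i=1}^n w_i p_i(x) - p_\theta(x)\Big) f'(p_\theta(x)),
$$
where I have used $\sum_i w_i=1$ to collapse the $f(p_\theta(x))$ and $p_\theta(x)\,f'(p_\theta(x))$ contributions. Recognizing $\bar p(x)=\sum_i w_i p_i(x)$, the last two groups of terms are precisely the $\theta$-dependent part of $B_f(\bar p(x),p_\theta(x))$; the only discrepancy is $\sum_i w_i f(p_i(x)) - f(\bar p(x))$, the pointwise Jensen gap, which is manifestly independent of $\theta$.

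Next I would integrate this identity over $x$ against $\mu$. The integrated discrepancy $\int\big(\sum_i w_i f(p_i(x)) - f(\bar p(x))\big)\,\dmu(x)$ is a constant $C$ not depending on $\theta$ (it is the pointwise Bregman information of $p_1,\dots,p_n$), yielding
$$
\sum_{i=1}^n w_i B_f^\mu(p_i:p_\theta) = B_f^\mu(\bar p:p_\theta) + C.
$$
Since $C$ is constant in $\theta$, the two objective functions differ by an additive constant and hence share the same set of minimizers over $\theta\in\Theta$, which is exactly the claimed equality of argmins; this mirrors the Bregman bias-variance decomposition recorded earlier.

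The main obstacle is not algebraic but measure-theoretic: I must justify the interchange of the finite sum and the integral and verify that $C$ is finite, which amounts to asking that each $\int |f(p_i(x))|\,\dmu(x)$ and $\int |f(\bar p(x))|\,\dmu(x)$ be finite (equivalently, that the divergences under consideration are well defined and finite on $\calF$). Under that standing assumption the interchange follows from Tonelli's theorem applied to the nonnegative integrand $B_f(p_i(x),p_\theta(x))\ge 0$, and the argument closes.
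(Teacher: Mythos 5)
Your proof is correct and takes essentially the same route as the paper: both expand the weighted objective $\sum_i w_i B_f^\mu(p_i:p_\theta)$, use $\sum_i w_i=1$ to identify its $\theta$-dependent part with that of $B_f^\mu(\bar p:p_\theta)$, and discard the $\theta$-independent Jensen-gap term, so the two objectives differ by a constant and share the same minimizers. If anything, your write-up is slightly more careful than the paper's (which writes the up-to-an-additive-constant steps as equalities and contains a typographical slip), since you record the exact identity $\sum_i w_i B_f^\mu(p_i:p_\theta)=B_f^\mu(\bar p:p_\theta)+C$ and justify the sum--integral interchange via Tonelli, but the underlying argument is identical.
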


Pelletier considered the case of densities $p_i=p_{\theta_i}$ to belong to a full exponential family $\calE$, and proved that 
the KL projection of the mixture $m(w)=\sum_{i=1}^n w_i p_{\theta_i}$ onto $\calE=\{p_\theta\st\theta\in\Theta\}$ is unique:
$$
\arg\min_{\theta\in\Theta} D_\KL(m(w):p_\theta)= \sum_{i=1^n} w_i\theta_i :=\bar\theta.
$$
See also~\cite{schwander2012learning} for another proof. 

\def\sqr{\mathrm{sqr}}
\subsection{Additively weighted quadratic Bregman divergences (AWQ-BDs)}\label{sec:addweightquadBD}

One problem with the Jeffreys-type symmetrization $S_F$ of the Bregman divergence $B_F$ is that $\sqrt{S_F}$ may not be a metric distance.
See~\cite{chen2008metrics} for a study of other symmetrizations of Bregman divergences with conditions yieldings to metrics.

Now, consider the following symmetrization reported in~\cite{acharyya2013bregman} of any $m$-dimensional Bregman divergence $B_F(\theta_2:\theta_1)$ for $\theta_1,\theta_2\in\Theta\subset\bbR^m$ and $F$ a Legendre-type generator (with gradient space $H=\{\nabla F(\theta) \st \theta\in\Theta\}$):

\begin{eqnarray}
S_F^{(\alpha,\beta)}(\theta_1,\theta_2) &=& \inner{\theta_2-\theta_1}{\nabla F(\theta_2)-\nabla F(\theta_1)}\nonumber\\
&& +\frac{\alpha}{2} \inner{\theta_2-\theta_1}{\theta_2-\theta_1}+\frac{\beta}{2} \inner{\nabla F(\theta_2)-\nabla F(\theta_1)}{\nabla F(\theta_2)-\nabla F(\theta_1)},
\end{eqnarray}
for any $\alpha\geq 0$ and $\beta\geq 0$.
When $\alpha=\beta=0$, we recover the Jeffreys symmetrization of Bregman divergences: $S_F^{(0,0)}(\theta_1,\theta_2)=S_F(\theta_1,\theta_2)$.
Note that $S_F^{(\alpha,\beta)}(\theta_1,\theta_2)=S_{F^*}^{\beta,\alpha}(\eta_1,\eta_2)$ since ${F^*}^*=F$.

Now, when $\alpha\beta\geq 1$, $\sqrt{S_F^{(\alpha,\beta)}}$ is proved to be a metric distance (Theorem~3 of \cite{acharyya2013bregman}).
Furthermore,  $\sqrt{S_F^{(\alpha,\beta)}}$ is shown to be a Hilbert metric distance using the following feature mapping $\Phi: \theta\rightarrow\Theta\times H\subset\bbR^{2m}$ (Lemma~1 of \cite{acharyya2013bregman}):

\begin{eqnarray}
\Phi_{\alpha,\beta}(\theta) &=&
 \left[\begin{array}{ll}
\sqrt{\alpha}\theta+\frac{1}{\sqrt{\alpha}}\nabla F(\theta)\\
\sqrt{\frac{\alpha\beta-1}{\alpha}}\, \nabla F(\theta)
\end{array}\right]\in\bbR^{2m},\\ 
S_F^{(\alpha,\beta)}(\theta_1,\theta_2) &=& \frac{1}{2}\, \|\Phi_{\alpha,\beta}(\theta_1)-\Phi_{\alpha,\beta}(\theta_2)\|_2^2,\quad \alpha>0,\beta>0,\alpha\beta>1.
\end{eqnarray}

Notice that the right-hand-side of the last equation is a squared Euclidean distance, namely a Bregman divergence:
$$
S_F^{(\alpha,\beta)}(\theta_1,\theta_2) = B_{F_\sqr}(\Phi(\theta_1),\Phi(\theta_2)),
$$
where $F_\sqr(x,y)=\inner{x}{x}+\inner{y}{y}=\|x\|^2+\|y\|^2$.

Thus the AWQ-BDs are curved squared Euclidean divergence (and more generally, kernel methods in machine learning yield curved Euclidean distances in RKHS spaces).
Except for quadratic generators $F(\theta)=A\theta$ for a symmetric positive-definite (SPD) matrix $A$, those additively weighted quadratic Bregman divergences (AWQ-BDs) are not Bregman divergences.

Consider the generator
$$
\hat{F}_{\alpha,\beta}(\xi_1,\xi_2) = F(\xi_1)+\frac{\alpha}{2} \inner{\xi_1}{\xi_1} + F^*(\xi_2)+ \frac{\beta}{2}\inner{\xi_2}{\xi_2},
$$ 
for $\xi=(\xi_1,\xi_2)$. This is a Legendre-type function yields a Bregman divergence $B_{\hat{F}_{\alpha,\beta}}$ with
$$
\nabla\hat{F}_{\alpha,\beta}(\xi_1,\xi_2)=
\left[\begin{array}{ll}
\nabla F(\xi_1) + \alpha\xi_1\\
\nabla F^*(\xi_2) + \beta\xi_2
\end{array}\right].
$$

Let $\xi(\theta)=(\theta,\nabla F(\theta))$ be the embedding map.
Then we have
\begin{equation}
S_F^{(\alpha,\beta)}(\theta_1,\theta_2) = B_{\hat{F}_{\alpha,\beta}}(\xi(\theta_1),\xi(\theta_2)).
\end{equation}
Thus $S_F^{(\alpha,\beta)}$ is a $m$-dimensional curved Bregman divergence of $\bbR^{2m}$ where $\calU=\{\xi(\theta) \st \theta\in\Theta\}$.

Now, the centroid of weighted $n$ points $\theta_1,\ldots,\theta_n$ (with weight vector $w\in\Delta_n$) with respect to $S_F^{(\alpha,\beta)}$ for $\alpha\beta>1$
 amounts to squared Euclidean/curved  Bregman projections of either the Euclidean centroid on the corresponding set of features $\phi_1=\Phi_{\alpha,\beta}(\theta_1),\ldots, \phi_n=\Phi_{\alpha,\beta}(\theta_n)$ or on the embedded parameters $\xi(\theta_1),\ldots, \xi(\theta_n)$:

\begin{eqnarray*}
&&\arg\min_{\theta\in\Theta} \sum_{i=1}^n w_i \, S_F^{(\alpha,\beta)}(\theta,\theta_i),\\
&&\arg\min_{\theta\in\Theta} \sum_{i=1}^n w_i\, \|\Phi(\theta)- \bar\phi\|_2^2,\\
&&\arg\min_{\theta\in\Theta} \sum_{i=1}^n w_i\, B_{\hat{F}}(\bar\xi:\xi(\theta)),
\end{eqnarray*}
where  $\bar\phi=\sum_{i=1}^n w_i\,\phi_i$ and $\bar\xi=\sum_{i=1}^n w_i\,\xi(\theta_i)$.

%
%

\begin{Remark}
Notice that when $\alpha\beta=1$ and $\alpha>0$, the mapping $\Phi$ is from $\Theta\rightarrow\Theta$ since the second vector block vanishes in that particular case.
To fix ideas, consider $m=1$, and $\alpha=\beta=1$. Then we have the diffeomorphic mapping $\Phi(\theta)=\theta+F'(\theta)$.
$\Phi$ is an increasing function corresponding to the derivative of the Legendre type function $G(\theta)=F(\theta)+\frac{1}{2}\theta^2$.
Thus the inverse function $(G')^{-1}$ is well-defined globally.
\end{Remark}

In~\cite{acharyya2013bregman}, the AWQ-BDs are further generalized using two SPD matrices $A$ and $B$ as follows:

\begin{equation}
S_F^{(A,B)}(\theta_1,\theta_2) = S_F(\theta_1,\theta_2)
  + (\theta_2-\theta_1)^\top A (\theta_2-\theta_1)+
	(\nabla F(\theta_2)-\nabla F(\theta_1))^\top B (\nabla F(\theta_2)-\nabla F(\theta_1)).
\end{equation}
It is shown that $\sqrt{S_F^{(A,B)}}$ is a metric provided that $AB-I$ is SPD (Theorem~4 of~\cite{acharyya2013bregman}).
When $A=\alpha I$ and $B=\beta I$, we have $AB-I=(\alpha\beta-1)I$ which is SPD when $\alpha\beta>1$ and PD when $\alpha\beta=1$.

\subsection{Kullback-Leibler divergence between circular complex normal distributions}\label{sec:CN}
The $d$-variate circular complex normal distribution~\cite{ollila2012complex}  $\CN_d(\mu_\bbC,S_\bbC)$ can be handled as a 
$2d$-real normal distribution $N_{2d}([\mu_\bbC]_\bbR,\frac{1}{2} [S_\bbC]_\bbR)$ where $[z=a+ib]_\bbC=(a,b)$ and $[M=A+iB]_\bbC=\mattwo{A}{-B}{B}{A}$, where $a,b\in\bbR$ and $A,B\in\bbR^{d\times d}$.
Let $N(m_\bbC,S_\bbC)$ and $N(m_\bbC',S_\bbC')$ be two circular complex normal distributions.
The Kullback-Leibler divergence between those two distributions amounts to a KLD between their 
corresponding real normal distributions $N_{2d}(\mu,\Sigma)$ and $N_{2d}(\mu',\Sigma')$ where $\mu=[m_\bbC]_\bbC$, $\Sigma=[S_\bbC]_\bbC$, 
and $\mu'=[m'_\bbC]_\bbC$, $\Sigma'=[S_\bbC']_\bbC$.
Since the KLD between two densities of an exponential family amounts to a reverse Bregman divergence between their natural parameters~\cite{azoury2001relative}, 
we have
\begin{equation}
D_\KL[p_{m_\bbC,S_\bbC}:p_{m_\bbC',S_\bbC'}] = D_\KL[p_{\mu,\Sigma}:p_{\mu',\Sigma'}] = B_F(\theta':\theta).
\end{equation}
The Gaussian family $\{N(\mu,\Sigma)\}$ is an exponential family for natural parameter $\theta=(\Sigma^{-1}\mu,\frac{1}{2}\Sigma^{-1})$ and cumulant function (or log-partition function) $F(\theta)=F(\theta_1,\theta_2)=\frac{1}{2}\left(d\log\pi -\log\det(\theta_2)+\frac{1}{2} \theta_1^\top \theta_2^{-1}\theta_1\right)$.
Thus the family of $d$-variate circular complex normal distributions is  a curved exponential family of $2d$-variate real normal distributions with
$$
\calU=\left\{ (v,M) \st v\in\bbR^{2d}, M=\mattwo{A}{-B}{B}{A}, A\in\bbR^{d\times d}\succ 0, B\in\bbR^{d\times d}\succ 0\right\}.
$$

More generally, any matrix over a finite-dimensional real algebra can be represented as a real matrix~\cite{MatrixGroupBaker-2003}.

When $\Theta_{\calU}=\{u(\theta) \st \theta\in\Theta\}$ is the restriction of a natural parameter space to an affine subspace, the curved Bregman divergence is a sub-dimensional Bregman divergence.

\begin{Example}\label{ex:curvedKLDcat}
The Kullback-Leibler divergence (KLD) $D_\KL^+$ extended to positive arrays~\cite{Csiszar-1991} $\Theta=\bbR_{>0}^m$ (positive orthant cone)
 is a separable Bregman divergence obtained for the generator $F_{\KL^+}(u)=\sum_{i=1}^m u_i\log u_i$: 
$D_\KL^+(\theta:\theta')=\sum_{i=1}^m \theta_i\log\frac{\theta_i}{\theta'_i}+\theta_i'-\theta_i$.
Consider the hyperplane $H_\Delta:\sum_{i=1}^m x_i=1$. Then $\Theta_{|H_\Delta}=\Delta_m$ is the standard simplex, 
and $u(x)=(x_1,\ldots,x_{m-1})\in\Delta_{m-1}$ is the moment parameter of the simplex exponential family.
Then the curved Bregman divergence $B_{F_{\KL^+}}(u(\theta):u(\theta'))$ which corresponds to the KLD defined on the standard simplex amounts to a sub-dimensional Bregman divergence: 
$B_{F_{\KL^+}}(u(\theta):u(\theta'))=B_{F_\KL}(\alpha:\alpha')$ with $\alpha=u(\theta)$, $\alpha'=u(\theta')$ in $\Delta_m$ for the non-separable generator 
$F_\KL(\alpha)=\sum_{i=1}^{m-1} \alpha_i\log\alpha_i+(1-\sum_{i=1}^{m-1} \alpha_i)\log(1-\sum_{i=1}^{m-1} \alpha_i)$. 
\end{Example}

\subsection{Generalized left-sided Bregman centroids}\label{sec:leftsidedBDcentroid}

Recall that the left-sided Bregman centroid~\cite{nielsen2009sided} of a weighted parameter set $\{(w_1,\theta_1), \ldots, (w_n,\theta_n)\}$ with normalized weight vector $w\in\Delta_n$ and the $\theta_i$'s in $\Theta$ (with $\dim(\Theta)=m$) is the unique minimizer of the following objective function:

$$
\sum_{i=1}^n w_i \, B_F(\theta:\theta_i).
$$

Let us consider the $n$-fold Bregman divergence induced by the generator $\bar F: \Theta^n\rightarrow \bar\bbR$ (with $\dim(\Theta^n)=nm$):
$$
\bar F(\xi)=\sum_{i=1}^n w_i \, F(\xi_i),
$$
where $\xi=(\xi_1,\ldots, \xi_n)\in \Theta^n$.

We may further consider the more general left-sided Bregman centroid by allowing a different Bregman generator $F_i:\Theta_i\rightarrow \bar\bbR$ for each parameter  and ask to minimize the following criterion:
\begin{equation}\label{eq:genlsc}
\sum_{i=1}^n w_i\, B_{F_i}(\theta:\theta_i).
\end{equation}

We shall now use the property that $\alpha B_F(\theta:\theta')+\beta B_G(\theta:\theta')=B_{\alpha F+\beta G}(\theta:\theta')$ for any $\alpha>0, \beta>0$ and $F$ and $F$ strictly convex and differentiable convex functions.

Thus we define the $n$-fold Bregman generator by $\bar F: \Theta_{(n)}=\prod_{i=1}^n\Theta_i\rightarrow \bar\bbR$
$$
\bar F(\xi) = \sum_{i=1}^n w_i F_i(\xi_i),
$$
for $\xi=(\xi_1,\ldots,\xi_n)\in \Theta_{(n)}$.

\def\diag{\mathrm{diag}}

The domain $\Theta_{(n)}$ is convex since for any $\lambda\in [0,1]$, we have $\lambda\xi+(1-\lambda)\xi'=(\lambda\xi_1+(1-\lambda)\xi'_1,\ldots,\lambda\xi_n+(1-\lambda)\xi')\in \Theta_{(n)}$ since $\lambda\xi_i+(1-\lambda)\xi'_i\in\Theta_i$, 
 and generator $\bar F$ is a strictly convex Bregman generator since
 $\nabla^2_i\bar F(\xi)=\diag(w_1 \nabla^2 F_1(\xi_1), \ldots, w_n \nabla^2 F_1(\xi_n))$ is positive-definite (blockwise positive-definite Hessian matrix).

Next, assume that all $\Theta_i$'s coincide to $\Theta$ so that $\Theta_{(n)}=\Theta^n$.
Let $\theta_{(n)}=(\theta_1,\ldots,\theta_n)\in\Theta_{(n)}$ and consider the mapping $\xi(\cdot): \Theta\rightarrow \Theta^n$, $\theta\mapsto\xi(\theta)=(\theta,\ldots,\theta)\in\Theta^n$.
Then the optimization problem of Eq.~\ref{eq:genlsc} can be written  equivalently as:

$$
\min_{\theta\in\Theta} B_{\bar F}(\xi(\theta):\theta_{(n)}) = \sum_{i=1}^n w_i\, B_{F_i}(\theta:\theta_i).
$$

That is, the generalized left-sided Bregman centroid amounts to a left-sided Bregman projection of the $n$-fold input set compound parameter $\theta_{(n)}$ of $\Theta^n$ onto
 the $m$-dimensional submanifold $\{\xi(\theta) \st\theta\in\Theta\}\subset \Theta_{(n)}$ with $\dim(\Theta_{(n)})=nm$.

\begin{Proposition}\label{prop:leftsidedGenBregmanCentroid}
The generalized left-sided Bregman centroid $\theta_L$ of a weighted point set $\{(w_1,\theta_1), \ldots, (w_n,\theta_n)\}$ with $w\in\Delta_n$ and $\theta_i\in\Theta$ minimizing $\sum_{i=1}^n w_i\, B_{F_i}(\theta:\theta_i)$ for $F_i:\Theta\rightarrow\bar\bbR$ amounts to a left-sided Bregman projection with respect to the compound Bregman divergence $B_{\bar F}$:
 $\min_{\theta\in\Theta} B_{\bar F}(\xi(\theta):\theta_{(n)})$.
Furthermore, we have $\theta_L=(\nabla\bar F)^{-1} \left(\sum_i w_i\nabla F_i(\theta_i)\right)$ where $\nabla\bar F(\xi)=\sum_i w_i\nabla F_i(\xi_i)$.
\end{Proposition}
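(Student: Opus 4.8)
The plan is to establish the two assertions separately: first the variational equivalence between the generalized left-sided centroid problem and the left-sided compound Bregman projection, and then the closed-form expression for the minimizer via first-order optimality. The stated additivity $\alpha B_F(\theta:\theta')+\beta B_G(\theta:\theta')=B_{\alpha F+\beta G}(\theta:\theta')$ does not apply directly to $\sum_i w_i B_{F_i}(\theta:\theta_i)$ because the second arguments $\theta_i$ differ across $i$; lifting to the product domain $\Theta^n$ via the diagonal embedding $\xi$ is exactly what circumvents this.

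For the equivalence, I would expand $B_{\bar F}(\xi(\theta):\theta_{(n)})$ using the blockwise structure of the compound generator $\bar F(\xi)=\sum_i w_i F_i(\xi_i)$. The diagonal embedding gives $\bar F(\xi(\theta))=\sum_i w_i F_i(\theta)$ and $\bar F(\theta_{(n)})=\sum_i w_i F_i(\theta_i)$, while the full gradient $\nabla\bar F(\theta_{(n)})$ is the stacked vector whose $i$-th block is $w_i\nabla F_i(\theta_i)$. Since $\xi(\theta)-\theta_{(n)}=(\theta-\theta_1,\ldots,\theta-\theta_n)$, the coupling inner product splits as $\sum_i w_i\inner{\theta-\theta_i}{\nabla F_i(\theta_i)}$, and collecting the three pieces reproduces termwise exactly $\sum_i w_i B_{F_i}(\theta:\theta_i)$. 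Hence the two objectives agree as functions of $\theta$, so their $\arg\min$ over $\theta\in\Theta$ coincide; this identifies the generalized left-sided centroid with the left-sided Bregman projection of $\theta_{(n)}$ onto the diagonal submanifold $\{\xi(\theta)\st\theta\in\Theta\}$.

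For the closed form, I would note that $E(\theta):=\sum_i w_i B_{F_i}(\theta:\theta_i)$ is strictly convex in $\theta$: each $B_{F_i}(\cdot:\theta_i)$ is convex in its first argument (its Hessian with respect to $\theta$ is $\nabla^2 F_i$, the linear and constant terms in $\theta_i$ dropping out), and a positive combination with $w\in\Delta_n$ is strictly convex. The unique minimizer is thus characterized by vanishing gradient. Using $\nabla_\theta B_{F_i}(\theta:\theta_i)=\nabla F_i(\theta)-\nabla F_i(\theta_i)$, stationarity reads $\sum_i w_i\nabla F_i(\theta_L)=\sum_i w_i\nabla F_i(\theta_i)$. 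Writing the diagonal restriction $G(\theta):=\bar F(\xi(\theta))=\sum_i w_i F_i(\theta)$, whose gradient is $\nabla G(\theta)=\sum_i w_i\nabla F_i(\theta)$, this becomes $\nabla G(\theta_L)=\sum_i w_i\nabla F_i(\theta_i)$, and inverting yields $\theta_L=(\nabla G)^{-1}\bigl(\sum_i w_i\nabla F_i(\theta_i)\bigr)$, which is the claimed formula once one reads $\nabla\bar F$ there as the gradient of the diagonal restriction $G$.

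The main point requiring care is the invertibility of $\nabla G$ and the membership of $\sum_i w_i\nabla F_i(\theta_i)$ in its range, so that $(\nabla G)^{-1}$ is well-defined at that point. Strict convexity makes $\nabla G$ injective on $\topint(\Theta)$; when all $F_i$ coincide with a single Legendre-type $F$, the target is a convex combination of gradients lying in the convex gradient space $H$, hence in the range, recovering the classical left-sided centroid $(\nabla F)^{-1}\bigl(\sum_i w_i\nabla F(\theta_i)\bigr)$. For heterogeneous generators one must additionally ensure that this convex combination of possibly distinct dual vectors lands in $\nabla G(\topint(\Theta))$; assuming the $F_i$ are all of Legendre type on the common domain $\Theta$, this holds and the minimizer lies in $\topint(\Theta)$, so the expression is legitimate.
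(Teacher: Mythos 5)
Your proof is correct and follows essentially the same route as the paper: both reduce the problem to first-order optimality of the strictly convex objective, yielding the stationarity equation $\sum_i w_i \nabla F_i(\theta_L)=\sum_i w_i \nabla F_i(\theta_i)$ --- the paper isolates the $\theta$-dependence via the Fenchel--Young form of $B_{F_i}$ while you differentiate the Bregman divergence directly, an immaterial difference. Your explicit blockwise expansion of $B_{\bar F}(\xi(\theta):\theta_{(n)})$ and your attention to invertibility of the diagonal-restriction gradient are slightly more thorough than the paper, which states the diagonal-embedding identity in the text preceding the proposition and leaves well-definedness implicit.
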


\begin{proof}
Let us write $B_{F_i}$ using the equivalent Fenchel-Young divergence
$B_{F_i})(\theta:\theta_i)=F_i(\theta)+F_i^*(\nabla F_i(\theta_i))-\inner{\theta}{\nabla F_i(\theta_i)}$.
We have
\begin{eqnarray*}
&& \min_\theta \sum_{i=1}^n w_i\, B_{F_i}(\theta:\theta_i),\\
&& \min_\theta  \sum_{i=1}^n w_i\,  \left( F_i(\theta)+F_i^*(\nabla F_i(\theta_i))-\inner{\theta}{\nabla F_i(\theta_i)}\right),\\
&& \equiv \min_\theta E(\theta)={\bar F}(\theta)-\inner{\theta}{\sum_{i=1}^n w_i \nabla F_i(\theta_i)}.
\end{eqnarray*}

The minimum is found for $\nabla E(\theta)=0$, i.e., $\nabla {\bar F}(\theta)=\sum_{i=1}^n w_i \nabla F_i(\theta_i)$.
That is, for $\theta=(\nabla\bar F)^{-1} \left(\sum_i w_i\nabla F_i(\theta_i)\right)$.
Furthermore, the minimum is unique since $\nabla^2 E(\theta)=\nabla^2{\bar F}(\theta)$ is positive-definite.
\end{proof}

\section{Approximating Jeffreys-Bregman centroids}\label{sec:approxsymBD}

Let us define the scaled skewed Jensen divergence as
$$
\barJ_{F,\alpha}(\theta_l:\theta_r) = \frac{1}{\alpha(1-\alpha)} \left(\alpha F(\theta_l)+(1-\alpha) F(\theta_r) - F(\alpha\theta_l+(1-\alpha)\theta_r)\right).
$$
We have $\lim_{\alpha\rightarrow 0} \barJ_{F,\alpha}(\theta_l:\theta_r)=B_F(\theta_l:\theta_r)$ and 
$\lim_{\alpha\rightarrow 1} \barJ_{F,\alpha}(\theta_l:\theta_r)=B_F(\theta_r:\theta_l)$.

Let us design a family of symmetric $\alpha$-Jensen divergences~\cite{nielsen2010family} for $\alpha\in(0,1)$:

$$
\barJ_{F,\alpha}^s(\theta_l,\theta_r)= \barJ_{F,\alpha}(\theta_l:\theta_r) + \barJ_{F,\alpha}(\theta_r:\theta_l) 
$$
which satisfies
\begin{eqnarray*}
\lim_{\alpha\rightarrow 0} \barJ_{F,\alpha}^s(\theta_l,\theta_r) &=& S_F(\theta_l,\theta_r),\\
\lim_{\alpha\rightarrow 1} \barJ_{F,\alpha}^s(\theta_l,\theta_r) &=& S_F(\theta_l,\theta_r).
\end{eqnarray*}

Consider a small prescribed real-value  for $\alpha=\epsilon>0$.
Then the  $\alpha$-Jensen  centroid for a weighted parameter set $\theta_1,\ldots, \theta_n$ with $w\in\Delta_n$ minimizes
\begin{eqnarray*}
&&\min_{\theta\in\Theta} \sum_{i=1}^n w_i \barJ_{F,\alpha}^s(\theta_i,\theta),\\
&& \equiv \min_{\theta\in\Theta} \sum_{i=1}^n  w_i (F(\theta_i)+F(\theta)-F(\epsilon\theta_i+(1-\epsilon)\theta)-F((1-\epsilon\theta_i)+\epsilon\theta)),\\
&&  \equiv \min_{\theta\in\Theta} F(\theta)-\sum_{i=1}^n w_i\, \left( F(\epsilon\theta_i+(1-\epsilon)\theta)+ F((1-\epsilon\theta_i)+\epsilon\theta)\right).
\end{eqnarray*}

This optimization problem is a difference of a convex function $F(\theta)$ and a concave term $-\sum_{i=1}^n \left( F(\epsilon\theta_i+(1-\epsilon)\theta)+ F((1-\epsilon\theta_i)+\epsilon\theta)\right)$. We can thus apply the Convex-ConCave minimization procedure~\cite{yuille2002cccp}:
Initialize arbitrary $\theta^{(1)}_\epsilon=\theta_1$ (or using the center of mass $\sum_{i=1}^n  w_i\theta_i$),  
and iterate
$$
\theta^{(t+1)}_\epsilon = (\nabla F)^{-1}\left(
\sum_{i=1}^n  w_i\, \left( \nabla F(\epsilon\theta_i+(1-\epsilon)\theta^{(t)})+ \nabla F((1-\epsilon\theta_i)+\epsilon\theta^{(t)}_\epsilon)\right).
 \right)
$$

When $\epsilon\rightarrow 0$ and $T\rightarrow\infty$, the symmetric $\epsilon$-Jensen divergence tend to the symmetrized Bregman centroid.
For the case of the log det Bregman generator, we have $\nabla F(X)=(\nabla F)^{-1}(X)=-X^{-1}$.
For the symmetric KL case~\cite{nielsen2013jeffreys}, we use $\nabla F(x)=\frac{e^{x_i}}{1+\sum_{j=0}^{d-1} e^{x_j}}$ and 
$\nabla F^{-1}(x)=\log \frac{x_i}{1-\sum_{j=0}^{d-1} x_j}$ with
 $\theta= \left(\log\frac{p_1}{1-\sum_{j=1}^{d-1}p_j},\ldots, \log\frac{p_{d-1}}{1-\sum_{j=1}^{d-1}p_j}\right)$.
The CCCP iterations allow us to approximate finaly the symmetrized Bregman centroid without using Lambert $W$ function to finely approximate the SKL centroid~\cite{nielsen2013jeffreys}.

Observe that we have
\begin{eqnarray*}
S_F(\theta_l,\theta_r) &=& B_F(\theta_l:\theta_r)+B_F(\theta_r:\theta_l),\\
&=& B_{F^*}(\eta_r:\eta_l)+B_{F^*}(\eta_l:\eta_r)= S_{F^*}(\eta_l,\eta_r).
\end{eqnarray*}

Thus the symmetrized Bregman centroid can also be defined as
$$
\arg\min_{\eta} \sum_{i=1}^n w_i\, S_{F^*}(\eta_i,\eta),
$$
and we may approximate it by considering the $\alpha$-Jensen  centroid $\barJ_{F^*,\alpha}$  for a weighted parameter set $\eta_1,\ldots, \eta_n$ and $\alpha=\epsilon$, a prescribed small value.

The dual CCCP method is:
Initialize arbitrary $\eta^{(1)}_\epsilon=\eta_1$ (or using the center of mass $\sum_{i=1}^n  w_i\eta_i$),  
and iterate
$$
\eta^{(t+1)}_\epsilon = (\nabla F)^{-1}\left(
\sum_{i=1}^n  w_i\, \left( \nabla F(\epsilon\eta_i+(1-\epsilon)\eta^{(t)})+ \nabla F((1-\epsilon\eta_i)+\epsilon\eta^{(t)}_\epsilon)\right).
 \right).
$$

We may now alternatively perform one iteration using the $\theta$-parameterization and one parameterization using the $\eta$-parameterization to get the mixed parameterization CCCP procedure:

\begin{itemize}
\item Let $t=0$ and $\theta^{(0)}_\epsilon=\theta_1$ (or using the center of mass $\sum_{i=1}^n  w_i\theta_i$) 

\item iterate $T$ rounds:
\begin{itemize}

\item CCCP iteration on $\theta$-parameterization:
$$
\theta^{(2t+1)}_\epsilon = (\nabla F)^{-1}\left(
\sum_{i=1}^n  w_i\, \left( \nabla F(\epsilon\theta_i+(1-\epsilon)\theta^{(2t)})+ \nabla F((1-\epsilon\theta_i)+\epsilon\theta^{(2t)}_\epsilon)\right).
 \right)
$$

\item Convert $\theta$- to $\eta$-parameter: Let $\eta^{(2t+1)}=\nabla F(\theta^{(2t+1)})$.

\item CCCP iteration on $\eta$-parameterization:

$$
\eta^{(2t+2)}_\epsilon = (\nabla F)^{-1}\left(
\sum_{i=1}^n  w_i\, \left( \nabla F(\epsilon\eta_i+(1-\epsilon)\eta^{(2t+1)})+ \nabla F((1-\epsilon\eta_i)+\epsilon\eta^{(2t+1)}_\epsilon)\right).
 \right).
$$

\item Convert $\eta$- to $\theta$-parameter: Let $\theta^{(2t+2)}=(\nabla F)^{-1}(\eta^{(2t+2)})$.

\item Let $t\leftarrow t+1$ (next round).

\end{itemize}

\item Return $\theta^{(2T)}$ as an approximation of the symmetrized Bregman centroid.
\end{itemize}

%
%
%
%

\section{Curved representational Bregman divergences}\label{sec:crbd}
\subsection{Definition}

\begin{Definition}\label{def:crbd}
Let $B_F$ be a   Bregman divergence for the Legendre-type generator $F:\Theta\subset V\rightarrow \bbR$.
Consider a diffeomorphism $R: \Theta\rightarrow\Theta, \theta\mapsto r=R(\theta)$ called the representation function.
Then $B_F(R(\theta_1):R(\theta_2))=B_F(r_1:r_2)$ is termed a representational Bregman divergence~\cite{nielsen2009dual}.
\end{Definition}

Note that when $R$ is the identity function (i.e., $R(\theta)=\theta$), representational Bregman divergences become ordinary Bregman divergences.
 Examples of representational Bregman divergences are met when dealing with the Kullback-Leibler divergence (KLD) between two densities of a generic exponential family~\cite{barndorff2014information}, not necessarily a natural exponential family~\cite{azoury2001relative,nielsen2022statistical}.
Let $(\calX,\calF,\mu)$ be a measure space and $\calE=\{p_\lambda(x)=\exp(\inner{t(x)}{\theta}(\lambda))-F(\theta(\lambda))\dmu\}$ an exponential family with sufficient statistic vector  $t(x)$, natural parameter $\theta(\lambda)$ and cumulant/log-normalizer/log-partition function $F(\theta)$.
It is well-known that the KLD $D_\KL(p_{\lambda_1}:p_{\lambda_2})=\int p_{\lambda_1}(x)\log\frac{p_{\lambda_1}(x)}{p_{\lambda_2}(x)}\dmu$ between two densities $p_{\lambda_1}$ and $p_{\lambda_2}$ of an exponential family   amounts to a reverse representational Bregman divergence:
$$
D_\KL(p_{\lambda_1}:p_{\lambda_2})=B_F(\theta(\lambda_2):\theta(\lambda_1)).
$$
Here, the natural parameterization $\theta(\lambda)$ plays the role of the representation function.

\begin{Proposition}\label{prop:alphabdrep}
The $\alpha$-divergences~\cite{IG-2016} $D_\alpha^+(q_1:q_2)$ extended to the positive measures $q_1$ and $q_2$ of $\bbR_{>0}^m$:
$$
D_\alpha^+(q_1:q_2) = \left\{
\begin{array}{ll}
\frac{4}{1-\alpha^2}\, \sum_{i=1}^m \left( \frac{1-\alpha}{2}q_1 + \frac{1+\alpha}{2}q_2 - q_1^{\frac{1-\alpha}{2}}\, q_2^{\frac{1+\alpha}{2}} \right), \alpha\in\bbR\backslash\{-1,1\}\\
{D_\KL^\drev}^+(q_1:q_2) = D_\KL^+(q_2:q_1)=\sum_{i=1}^m q_2^i\log\frac{q_2^i}{q_1^i}+q_1^i-q_2^i & \alpha=1\\
D_\KL^+(q_1:q_2)= \sum_{i=1}^m q_1^i\log\frac{q_1^i}{q_2^i}+q_2^i-q_1^i & \alpha=-1.
\end{array}
\right.
$$
 is a representational Bregman divergence:
\begin{equation}
D_\alpha^+(q_1:q_2) = B_{F_\alpha}(R_\alpha(q_1):R_\alpha(q_2)) = B_{F_{-\alpha}}(R_{-\alpha}(q_2):R_{-\alpha}(q_1)),  
\end{equation}
where $F_\alpha(r)=\sum_{i=1}^m \,f_\alpha(r_i)$ is the separable Bregman generator on the representations induced by the potential function  
$$
f_\alpha(x)=\left\{
\begin{array}{ll}
\frac{2}{1+\alpha}\,\left( \frac{1-\alpha}{2} x \right)^{\frac{2}{1-\alpha}}, &\alpha\not =1\\
x\exp(x), &\alpha=1.
\end{array}
\right.
$$ 
and $r=R_\alpha(q)=(r_\alpha(q_1),\ldots,r_\alpha(q_m))$ is the representation function induced by 
$$
r_\alpha(x)=\left\{\begin{array}{ll}
\frac{2}{1-\alpha} \left( x^{\frac{1-\alpha}{2}}-1\right), & \alpha\not =1,\\
\log x, & \alpha=1.
\end{array}
\right.
$$


Furthermore, $D_\alpha^+$ can be expressed as an equivalent representational Fenchel-Young divergence:
\begin{equation}
D_\alpha^+(q_1:q_2) = Y_{F_\alpha,F_{-\alpha}}(R_\alpha(q_1):R_{-\alpha}(q_2)),
\end{equation}
where $F_{-\alpha}$ is the convex conjugate $F_\alpha^*$ of $F_\alpha$ and $R_{-\alpha}$ is the dual representation~\cite{nielsen2009dual}.
\end{Proposition}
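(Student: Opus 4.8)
The plan is to exploit separability to collapse the claim to a one-dimensional identity and then verify that identity by a direct computation of the scalar Bregman divergence in the representation coordinates. Since $D_\alpha^+$, $F_\alpha(r)=\sum_i f_\alpha(r_i)$ are both sums over the $m$ coordinates and $R_\alpha$ acts coordinatewise, it suffices to prove the scalar identity $d_\alpha^+(x:y)=b_{f_\alpha}(r_\alpha(x):r_\alpha(y))$ for the scalar potential $f_\alpha$, scalar representation $r_\alpha$, and scalar $\alpha$-divergence $d_\alpha^+$; summing over coordinates then yields the multivariate statement.

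For the scalar identity with $\alpha\neq\pm1$, I would set $\beta=\frac{1-\alpha}{2}$ to tame the exponents, writing $r_\alpha(x)=\frac1\beta x^\beta$ (up to the additive normalization discussed below), $f_\alpha(r)=\frac{1}{1-\beta}(\beta r)^{1/\beta}$, and $d_\alpha^+(x:y)=\frac{1}{\beta(1-\beta)}\bigl(\beta x+(1-\beta)y-x^\beta y^{1-\beta}\bigr)$. The key mechanism is that composing the generator with the representation linearizes it: $f_\alpha(r_\alpha(x))=\frac{x}{1-\beta}$ and $f_\alpha'(r_\alpha(y))=\frac{1}{1-\beta}y^{1-\beta}$, while $r_\alpha(x)-r_\alpha(y)=\frac1\beta(x^\beta-y^\beta)$. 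Substituting these into
\[
b_{f_\alpha}(r_\alpha(x):r_\alpha(y))=f_\alpha(r_\alpha(x))-f_\alpha(r_\alpha(y))-(r_\alpha(x)-r_\alpha(y))\,f_\alpha'(r_\alpha(y))
\]
and collecting coefficients produces exactly the three terms $\beta x$, $(1-\beta)y$ and $-x^\beta y^{1-\beta}$, i.e. $d_\alpha^+(x:y)$. The boundary cases $\alpha=\pm1$ are then recovered either directly, using $f_1(x)=x\exp(x)$ and $r_1(x)=\log x$, or by continuity as $\alpha\to\pm1$ (where $\frac{2}{1-\alpha}(x^{(1-\alpha)/2}-1)\to\log x$).

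The reverse identity $D_\alpha^+(q_1:q_2)=B_{F_{-\alpha}}(R_{-\alpha}(q_2):R_{-\alpha}(q_1))$ I would deduce not by a fresh computation but from the manifest $\alpha\leftrightarrow-\alpha$ symmetry of the defining formula: sending $\alpha\to-\alpha$ swaps the coefficients $\frac{1-\alpha}{2}\leftrightarrow\frac{1+\alpha}{2}$ and the exponents on $q_1,q_2$, so that $D_\alpha^+(q_1:q_2)=D_{-\alpha}^+(q_2:q_1)$; applying the already-proved representational identity at parameter $-\alpha$ gives the reverse form. For the Fenchel--Young expression I would invoke the standard identity $B_F(r_1:r_2)=Y_{F,F^*}(r_1:\nabla F(r_2))$, which reduces the claim to two facts: that $F_{-\alpha}=F_\alpha^*$ is the convex conjugate, and that $R_{-\alpha}=\nabla F_\alpha\circ R_\alpha$. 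The latter I would verify by the short computation $f_\alpha'(r_\alpha(q))=\frac{2}{1+\alpha}q^{(1+\alpha)/2}=r_{-\alpha}(q)$; the former then follows because this shows $\nabla F_\alpha$ and $\nabla F_{-\alpha}$ are mutually inverse gradient maps, which is the defining property of Legendre conjugacy up to an additive constant.

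The main obstacle I anticipate is bookkeeping but genuinely delicate: keeping the fractional exponents $\frac{1-\alpha}{2}$, $\frac{2}{1-\alpha}$ and $\frac{1+\alpha}{1-\alpha}$ consistent while collecting terms, and, more importantly, correctly handling the additive normalization constant in $r_\alpha$. A constant shift of the representation is \emph{not} absorbed by the affine invariance of Bregman divergences unless the generator is translated in tandem, so I would check carefully that the chosen normalization of $r_\alpha$ is mutually consistent with $f_\alpha$, i.e. that $f_\alpha(r_\alpha(x))-\frac{x}{1-\beta}$ is affine in $r_\alpha(x)$; only under that consistency does the linearization above, and hence the claimed identity, hold exactly.
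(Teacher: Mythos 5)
The paper gives you nothing to compare against: Proposition~\ref{prop:alphabdrep} is stated without proof, followed only by a pointer to Amari's characterization of the intersection of $f$-divergences and Bregman divergences, with the verification left to the cited references. Your route --- reduce to the scalar case by separability, compute the scalar Bregman divergence in the representation variable, get the reverse form from the manifest $\alpha\leftrightarrow-\alpha$ symmetry $D_\alpha^+(q_1:q_2)=D_{-\alpha}^+(q_2:q_1)$, and get the Fenchel--Young form from $B_F(r_1:r_2)=Y_{F,F^*}(r_1:\nabla F(r_2))$ --- is the natural direct verification, and its core computations are right: with $\beta=\frac{1-\alpha}{2}$, $s_\alpha(x)=\frac{1}{\beta}x^\beta$ and $f_\alpha(r)=\frac{1}{1-\beta}(\beta r)^{1/\beta}$, one indeed gets $f_\alpha(s_\alpha(x))=\frac{x}{1-\beta}$, $f_\alpha'(s_\alpha(y))=\frac{1}{1-\beta}\,y^{1-\beta}$, hence the three terms of $d_\alpha^+$; moreover $f_\alpha'\circ s_\alpha=s_{-\alpha}$, and a short conjugate computation gives $f_\alpha^*=f_{-\alpha}$ exactly (the additive constant you worried about turns out to be zero), which closes the Fenchel--Young step.

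The issue you flagged as your ``main obstacle'' is, however, not a formality, and you should carry that consistency check to its conclusion: it \emph{fails} for the formulas as printed. The stated $r_\alpha(x)=\frac{2}{1-\alpha}\bigl(x^{(1-\alpha)/2}-1\bigr)$ is the shifted (Box--Cox) representation, chosen so that $r_\alpha\to\log$ as $\alpha\to1$, whereas the stated generator $f_\alpha(x)=\frac{2}{1+\alpha}\bigl(\frac{1-\alpha}{2}x\bigr)^{2/(1-\alpha)}$ is the one matching the \emph{unshifted} representation $s_\alpha(x)=\frac{2}{1-\alpha}x^{(1-\alpha)/2}$. Since $f_\alpha$ is not quadratic (except at $\alpha=0$, the only value where the shift is harmless), the translation is not absorbed: for $\alpha=-\frac13$, $x=8$, $y=1$ one finds $B_{f_\alpha}\bigl(r_\alpha(8):r_\alpha(1)\bigr)=3\cdot3^{3/2}\approx15.59$ while $d_\alpha^+(8:1)=7.5$. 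The same mismatch occurs at $\alpha=1$: the generator consistent with $r_1=\log$ is $e^x$, not $xe^x$ (take $x=e$, $y=1$: the printed pair gives $e-1$, the required value is $e-2$), so your ``direct'' treatment of the boundary case using the printed $f_1$ would not go through. Your proof is therefore correct provided you either drop the $-1$ from $r_\alpha$ (keeping $f_\alpha$ as stated and taking $f_1(x)=e^x$), or keep the shifted $r_\alpha$ and translate the generator in tandem to $\frac{2}{1+\alpha}\bigl(1+\frac{1-\alpha}{2}x\bigr)^{2/(1-\alpha)}$, whose $\alpha\to1$ limit is again $e^x$. In short: the proposition as literally printed contains a normalization inconsistency, and the deferred check in your last paragraph is exactly the tool that detects it; finishing that check (and fixing the pairing) is what your write-up still owes.
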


Amari~\cite{amari2009alpha} proved that the intersection of the class of $f$-divergences extended on positive measures  with the class of Bregman divergences  are exactly the extended $\alpha$-divergences.

\section{Application: Intersection of $\alpha$-divergence spheres}\label{sec:inter}

It is well-known that the intersection of two $m$-dimensional Euclidean spheres $\Sigma_1$ and $\Sigma_2$ is yet another $(m-1)$-dimensional Euclidean sphere.
This section first deals with the intersection of two Bregman spheres induced by the same Bregman generator yielding another type of $(d-1)$-dimensional Bregman sphere (i.e., a sub-dimensional Bregman divergence sphere).

Let $\sigma_F(\theta,r)$ and $\sigma^\drev_F(\theta,r)$ be the right and left Bregman spheres, respectively:
\begin{eqnarray*}
\sigma_F(\theta,r) := \left\{\theta'\in\Theta \st B_F(\theta':\theta) = r\right\},\quad
\sigma_F^\drev(\theta,r) := \left\{\theta'\in\Theta \st B_F(\theta:\theta') = r\right\},
\end{eqnarray*}
where $\drev$ denotes the reference duality~\cite{zhang2004divergence}.
Using the Legendre-Fenchel convex duality ($\eta=\nabla F(\theta)=:\theta^*$ and $\theta=\nabla F^*(\eta)=:\eta^*$), we have~\cite{boissonnat2010bregman}:
\begin{eqnarray*}
\sigma_F^\drev(\theta,r) = \left(\sigma_{F^*}(\eta=\theta^*,r)\right)^*,\quad
\sigma_{F^*}^\drev(\eta,r) = \left(\sigma_{F}(\eta^*,r)\right)^*.
\end{eqnarray*} 

We now explain the space of Bregman spheres~\cite{boissonnat2010bregman}:
For a parameter $\theta$, let $\hat\theta:=(\theta,F(\theta))$ be the vertical lifting of $\theta$ onto the epigraph $\calF:=\{(\theta,F(\theta)\st\theta\in\Theta)\}$ and $\downarrow\hat\theta=\theta$ be the vertical projection so that $\downarrow(\hat\theta)=\theta$.
Let $\Sigma=\sigma_F(\theta,r)$ be a right Bregman sphere.
The pointwise lifted right Bregman sphere $\hat\Sigma=\hat\sigma_F(\theta,r)$ is supported by a non-vertical hyperplane $H_{\Sigma}$ of equation~\cite{boissonnat2010bregman}:
$$
H_{\Sigma}=H_{\theta,r}: y=\inner{\theta'-\theta}{\nabla F(\theta)}+F(\theta)+r.
$$
That is, we have $\Sigma = \downarrow(H_\Sigma\cap\calF)$.

Conversely, the intersection of any non-vertical hyperplane $H: y=\inner{\theta_a}{\theta}+b$ with $\calF$ projects vertically on $\Theta$ 
as a Bregman sphere with center
$\theta_H=\nabla F^{-1}(\theta_a)$ and radius $r_H=\inner{\theta_a}{\theta_H}-F(\theta_H)+b$.

The supporting hyperplane of the right Bregman sphere (Figure~\ref{fig:rightleftsphere}, blue) is obtained by lifting the center $\theta$ onto the epigraph, taking the tangent plane at $(\theta,F(\theta))$ and translating that tangent plane vertically by $r\geq 0$.
The left Bregman sphere  (Figure~\ref{fig:rightleftsphere}, red) is obtained by first illuminating~\cite{TDA-Bregman-2017} from $(\theta,F(\theta)-r)$ the epigraph $\calF$, and then projecting vertically the illuminated portion of $\calF$ onto $\Theta$.
Let $\calF^\circ$ denote the open interior of the epigraph $\calF$.
Then the illuminated portion of the graph $\partial\calF=\{(\theta,F(\theta)) \st \theta\in\Theta\}$ is given by 
$
\left(\barCH
\left( \{ (\theta,F(\theta)-r) \} \cup \calF \right)
\backslash \calF^\circ \right) \cap\partial\calF$,
where $\barCH$ denotes the closed convex hull operator.
Figure~\ref{fig:ExampleBregmanSpheres} illustrates the left and right Bregman sphere obtained for the scalar Shannon negentropy function
 $F_S(\theta)=\theta\log\theta$.

\begin{figure}
\centering
\includegraphics[width=\textwidth]{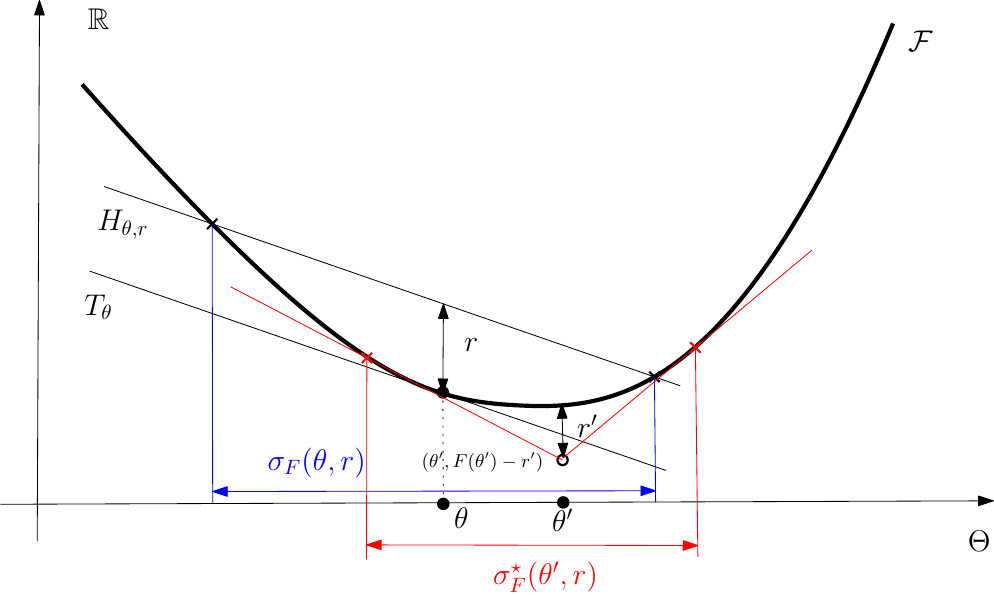}

\caption{Construction of Bregman spheres from the Bregman generator epigraph: Right Bregman sphere $\sigma_F(\theta,r)$ (blue) and 
left Bregman sphere $\sigma_F^\star(\theta',r')$ (red).}\label{fig:rightleftsphere}
\end{figure}


\def\wfig{0.32\textwidth}

\begin{figure}\centering
\begin{tabular}{ccc}
\includegraphics[width=\wfig]{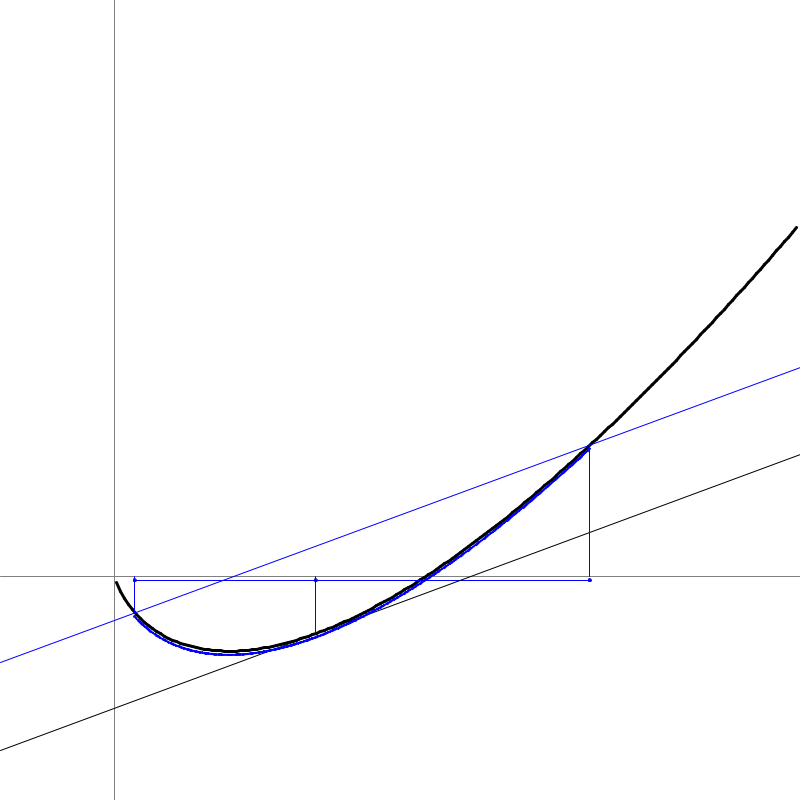} &
\includegraphics[width=\wfig]{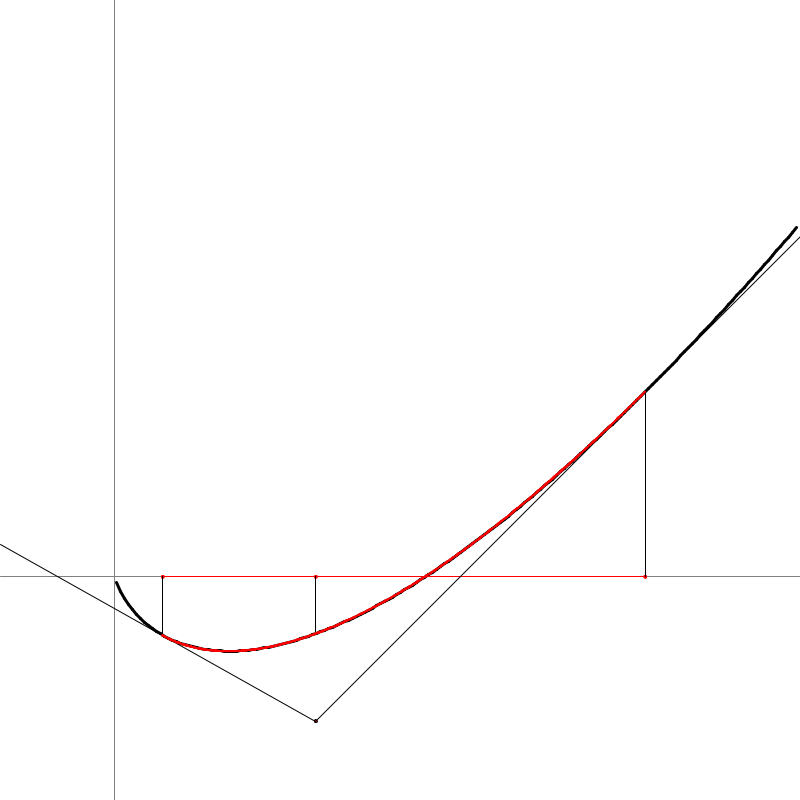} &
\includegraphics[width=\wfig]{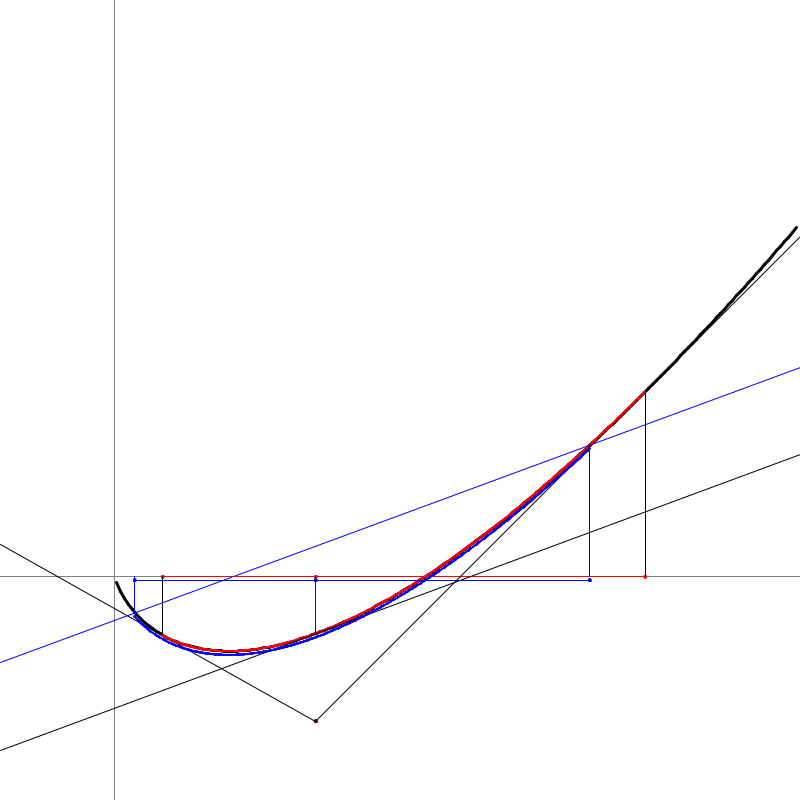} 
\end{tabular}
\caption{Illustrations of the lifting of left and right Bregman spheres on the potential function induced by the  Shannon neg-entropy function. 
Right Bregman sphere is obtained as the vertical projection of the hyperplane with the potential function (left figure). 
Left Bregman sphere is given as the intersection of the graph of the potential function with the convex hull including the point (i.e., the illuminated part of the potential function from that point, middle figure). Superposition of both right and left Bregman spheres (right figure).}

\label{fig:ExampleBregmanSpheres}
\end{figure}

The lifting of Bregman spheres $\sigma_F$ to the potential epigraph $\calF$ allows one to build efficient algorithms for computing the intersection of Bregman spheres:
 
\begin{Proposition}[\cite{boissonnat2010bregman}]
The intersection of two Bregman spheres $\Sigma_1$ and $\Sigma_2$ of dimension $m$ is a sub-dimensional Bregman sphere $\Sigma_{12}$ of dimension $m-1$.
\end{Proposition}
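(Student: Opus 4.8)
The plan is to work entirely within the space-of-Bregman-spheres lifting set up above, reducing the intersection of two full-dimensional right Bregman spheres to the intersection of their supporting hyperplanes with the potential graph $\calF$, and then to recognize the result as a sphere of a sub-dimensional generator. First I would write each right Bregman sphere $\Sigma_i=\sigma_F(\theta_i,r_i)$ as $\Sigma_i=\downarrow(H_i\cap\calF)$, where $H_i=H_{\theta_i,r_i}$ is the non-vertical supporting hyperplane $y=\inner{\theta'-\theta_i}{\nabla F(\theta_i)}+F(\theta_i)+r_i$ recalled above. Rewriting $H_i$ in the canonical form $y=\inner{\theta_{a,i}}{\theta'}+b_i$ gives $\theta_{a,i}=\nabla F(\theta_i)$ and $b_i=r_i-F^*(\nabla F(\theta_i))$. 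Because $\calF$ is a graph, the vertical projection $\downarrow$ is injective on $\calF$, so a point $\theta'$ lies in both $\Sigma_1$ and $\Sigma_2$ if and only if its lift $(\theta',F(\theta'))$ lies in $H_1\cap H_2$; hence $\Sigma_1\cap\Sigma_2=\downarrow((H_1\cap H_2)\cap\calF)$.

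Next I would extract the radical hyperplane by subtracting the two affine equations. On $H_1\cap H_2$ the two expressions for $y$ coincide, which forces $\inner{\nabla F(\theta_1)-\nabla F(\theta_2)}{\theta'}+(b_1-b_2)=0$. Since $F$ is of Legendre type, $\nabla F$ is injective, so for distinct centers $\theta_1\neq\theta_2$ we have $\nabla F(\theta_1)-\nabla F(\theta_2)\neq 0$ and this equation defines a genuine affine hyperplane $\pi\subset\Theta$ of dimension $m-1$, the vertical projection of the vertical hyperplane $\Pi=\pi\times\bbR\subset\bbR^{m+1}$. Combining this constraint with membership in $\Sigma_1$ shows not only $\Sigma_1\cap\Sigma_2\subset\pi$ but the sharper statement $\Sigma_1\cap\Sigma_2=\Sigma_1\cap\pi=\downarrow((H_1\cap\Pi)\cap\calF)$ — the exact analogue of the Euclidean fact that two spheres meet in their radical hyperplane.

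Finally I would restrict the generator to $\pi$. By the sub-dimensional Bregman construction of Proposition~\ref{thm:subBD}, the restriction $F|_\pi$, reparameterized by affine (barycentric) coordinates on $\pi$, is a bona fide $(m-1)$-dimensional Legendre-type generator $F_\pi$ whose graph is $\calF_\pi=\calF\cap\Pi$. Inside the $m$-dimensional slab $\Pi$, the trace $H_1\cap\Pi$ remains a non-vertical hyperplane (still expressible as $y$ equal to an affine function of the coordinates of $\pi$), so applying the converse direction of the space-of-spheres correspondence \emph{within} $\Pi$ identifies $\downarrow((H_1\cap\Pi)\cap\calF_\pi)=\Sigma_1\cap\pi$ as a right Bregman sphere of $B_{F_\pi}$, of dimension $m-1$; this is the claimed sub-dimensional sphere $\Sigma_{12}$.

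The main obstacle is the bookkeeping in the last two steps: the center $\theta_1$ of $\Sigma_1$ does not in general lie on $\pi$, so $\Sigma_1\cap\pi$ is \emph{not} literally a level set of $B_{F_\pi}(\cdot:\theta_1)$, and one must avoid trying to relocate the center by hand. Staying in the lifted hyperplane picture is precisely what sidesteps this, since ``restrict the supporting hyperplane to the slab $\Pi$'' is a clean linear operation that automatically supplies the correct new center $\theta_H=\nabla F_\pi^{-1}(\theta_a)$ and radius via the converse map. The only genuine hypothesis to flag is the non-degeneracy $\theta_1\neq\theta_2$ (equivalently, that the spheres are non-concentric), which guarantees $\pi$ has the full dimension $m-1$.
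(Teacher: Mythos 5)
Your proof is correct and takes essentially the same route as the paper's: lift each sphere to its supporting hyperplane, use injectivity of the vertical projection on the graph to get $\Sigma_1\cap\Sigma_2=\downarrow\left((H_1\cap H_2)\cap\calF\right)$, and recognize the restriction of $\calF$ to the vertical flat through $H_1\cap H_2$ as the graph of an $(m-1)$-dimensional Bregman generator whose sphere is the intersection. Your explicit radical-hyperplane computation, the non-degeneracy condition $\theta_1\neq\theta_2$ (via injectivity of $\nabla F$ for Legendre-type $F$), and the remark that the new center is supplied by the converse space-of-spheres map rather than by $\theta_1$ itself are careful refinements of details the paper leaves implicit.
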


\begin{proof}
Let $\Sigma_1=\sigma_F(\theta_1,r_1)=\{\theta' \st B_F(\theta':\theta_1)= r_1\}$ 
and $\Sigma_2=\sigma_F(\theta_2,r_2)=\{\theta' \st B_F(\theta':\theta_2)= r_2\}$ 
be two right Bregman spheres.
We have $\Sigma_1=\downarrow(H_{\Sigma_1}\cap F)$ and 
$\Sigma_2=\downarrow(H_{\Sigma_2}\cap F)$.
Thus, we have 
\begin{eqnarray*}
\Sigma_1\cap\Sigma_2 &=& \downarrow(H_{\Sigma_1}\cap F) \cap \downarrow(H_{\Sigma_1}\cap F) = \downarrow( (H_{\Sigma_1}\cap H_{\Sigma_2} )\cap F). 
\end{eqnarray*}
Let $H_{12}=H_{\Sigma_1}\cap H_{\Sigma_2}$ be a flat (affine subspace) of dimension $m-1$, and let $H_{12}^|$ be the vertical flat (with respect to $y$-axis) of $\bbR^{m+1}$ passing through $H_{12}$. Then $\calF_{12}=\calF\cap H_{12}^|$ is the epigraph of a Bregman generator $F_{12}$.
Thus the intersection of $\Sigma_1$ with $\Sigma_2$ is a Bregman sphere with respect to generator $F_{12}$.
\end{proof}

Figure~\ref{fig:paraboloidspheres} illustrates the representation of 2D Euclidean spheres by 3D planes cutting the paraboloid of revolution.
The boolean algebra of the union and intersection of Bregman balls can be manipulated as intersections of halfspaces cutting the epigraph of the potential function~\cite{boissonnat2010bregman}.

\begin{figure}
\centering
\begin{tabular}{cc}
\includegraphics[width=0.45\textwidth]{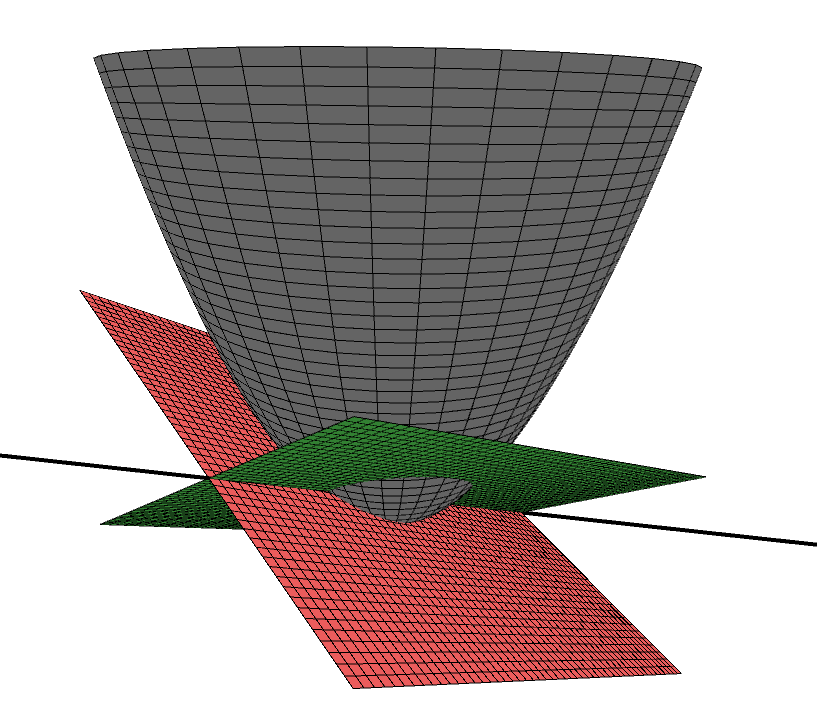} &
\includegraphics[width=0.45\textwidth]{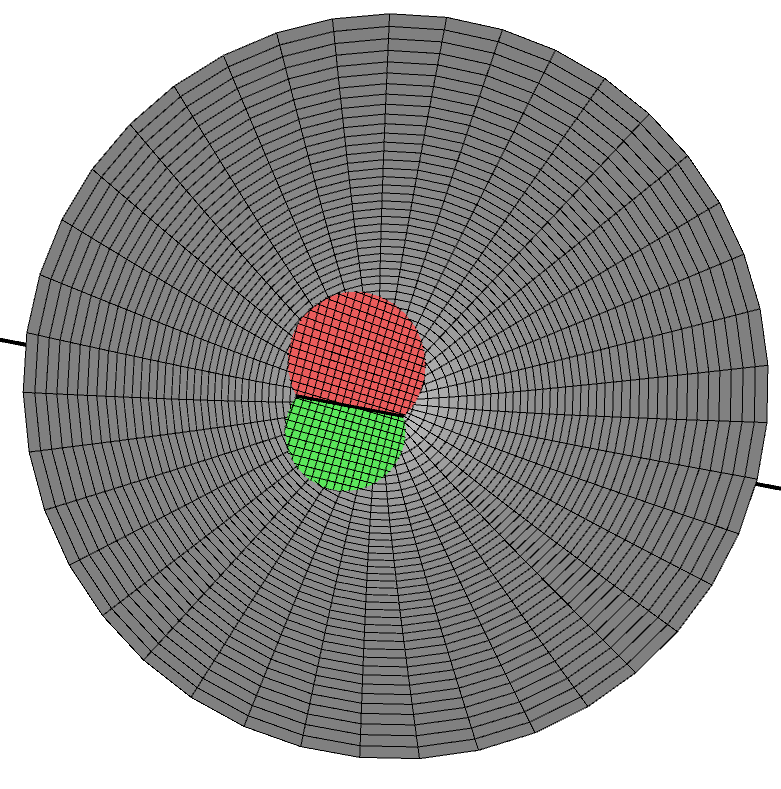}
\end{tabular}

\caption{Representing Bregman spheres $\Sigma$ as intersections of  hyperplanes $H_\Sigma$ with a potential function $\calF$ (paraboloid of revolution for Euclidean spheres). Left: Two 2D Bregman spheres represented by 3D planes $H_{\Sigma_1}$ and $H_{\Sigma_2}$. 
Right: Their union $\Sigma_1\cup\Sigma_2$ can be visualized by looking at the interior of the paraboloid from the top.
Their radical axis is the vertical projection of the intersection $H_{\Sigma_1}\cap H_{\Sigma_2}$.\label{fig:paraboloidspheres} }
\end{figure}

Now consider $\alpha$-divergence spheres ($\alpha$-spheres for short):
Let $\sigma_\alpha(\theta,r) := \left\{\theta'\in\Theta \st D_\alpha(\theta':\theta) = r\right\}$ be the right $\alpha$-sphere of center $\theta$ and radius $r$.
Then the left $\alpha$-divergence sphere $\sigma_\alpha^\drev(\theta,r):=\left\{\theta'\in\Theta \st D_\alpha(\theta:\theta') = r\right\}=\sigma_{-\alpha}(\theta,r)$.
Since $\alpha$-divergences are curved representational Bregman divergences, we may consider the following algorithm to calculate the intersection of $n$ $\alpha$-spheres $\Sigma_1=\sigma_\alpha(\theta_1,r_1),\ldots,  \Sigma_n=\sigma_\alpha(\theta_n,r_n)$:
Let $S_i=\sigma_{F_\alpha}(r_\alpha(\theta_i),r_i)$ be the representational Bregman sphere corresponding to $\Sigma_i$ for $i\in\{1,\ldots,n\}$, and $\calF_\alpha$ the epigraph of $F_\alpha(\theta)=\frac{2}{1+\alpha}\sum_{i=1}^m \left(\frac{1-\alpha}{2} \theta_i\right)^{\frac{2}{1-\alpha}}$.

\begin{Proposition}[Intersection of $\alpha$-spheres]\label{prop:intersection}
We have 
\begin{equation}
\cap_{i=1}^n \sigma_\alpha(\theta_i,r_i) = \left(\downarrow\left( \left(\cap_{i=1}^n H_{S_i} \right)\cap \calF_\alpha\right)\right) \cap \alphasimplex,
\end{equation}
where $\alphasimplex:=\{R_\alpha(x) \st x\in\Delta_m\}$ is the $\alpha$-representation of the probability simplex.
\end{Proposition}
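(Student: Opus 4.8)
The plan is to transport the whole problem through the diffeomorphism $R_\alpha$ into the $\alpha$-representation space, where by Proposition~\ref{prop:alphabdrep} the $\alpha$-divergence becomes an ordinary full-dimensional Bregman divergence $B_{F_\alpha}$, then apply the space-of-spheres lifting to each sphere separately and finally distribute the intersection through the vertical projection $\downarrow$. Concretely, $D_\alpha(\theta':\theta)=B_{F_\alpha}(R_\alpha(\theta'):R_\alpha(\theta))$ on the open simplex, so a probability distribution $\theta'$ lies on the right $\alpha$-sphere $\sigma_\alpha(\theta_i,r_i)$ exactly when its representation $r'=R_\alpha(\theta')$ satisfies $B_{F_\alpha}(r':R_\alpha(\theta_i))=r_i$, i.e. $r'\in S_i$, while the constraint $\theta'\in\Delta_m$ is recorded by $r'\in\alphasimplex$. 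Identifying each $\alpha$-sphere with its $R_\alpha$-image, this reads $R_\alpha(\sigma_\alpha(\theta_i,r_i))=S_i\cap\alphasimplex$, and intersecting over $i$ gives
$$
\cap_{i=1}^n \sigma_\alpha(\theta_i,r_i)=\left(\cap_{i=1}^n S_i\right)\cap\alphasimplex .
$$
Here a ``right'' $\alpha$-sphere yields a ``right'' Bregman sphere because the variable point $\theta'$ occupies the first argument of $B_{F_\alpha}$ in Proposition~\ref{prop:alphabdrep}.

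It then remains to compute $\cap_i S_i$ by the lifting recalled above: each right Bregman sphere $S_i=\sigma_{F_\alpha}(R_\alpha(\theta_i),r_i)$ equals $\downarrow(H_{S_i}\cap\calF_\alpha)$, with $H_{S_i}$ its supporting hyperplane and $\calF_\alpha$ the potential graph $\{(r,F_\alpha(r))\}$. The key step is to commute $\downarrow$ with the intersection,
$$
\cap_{i=1}^n S_i=\cap_{i=1}^n \downarrow(H_{S_i}\cap\calF_\alpha)=\downarrow\!\left(\left(\cap_{i=1}^n H_{S_i}\right)\cap\calF_\alpha\right),
$$
which holds because $\downarrow$ restricts to a bijection from $\calF_\alpha$ onto the domain: a point $r'$ lies in every $\downarrow(H_{S_i}\cap\calF_\alpha)$ iff its unique lift $(r',F_\alpha(r'))$ lies in every $H_{S_i}$, hence in $\cap_i H_{S_i}$. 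Substituting this identity into the previous display yields the asserted equality.

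The step requiring the most care is the bookkeeping of the simplex constraint against the ambient positive-measure cone: the Bregman spheres $S_i$ are full-dimensional in $R_\alpha(\bbR_{>0}^m)$, whereas the $\alpha$-spheres live on probability distributions, and $\alphasimplex=\{R_\alpha(x)\st x\in\Delta_m\}$ is exactly the curved submanifold onto which this constraint is transported. One must check that intersecting with $\alphasimplex$ commutes with the lifting; this is immediate since $\alphasimplex$ restricts only the base point and is inert under the vertical lift, so it may be pulled outside $\downarrow$. Thus the efficient space-of-spheres machinery is applied solely to the full Bregman spheres $\cap_i S_i$, and the simplex membership is reimposed afterwards by the final intersection with $\alphasimplex$.
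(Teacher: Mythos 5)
Your proof is correct and follows exactly the route the paper intends: the paper states Proposition~\ref{prop:intersection} without an explicit proof, but its ingredients are precisely your two steps --- transporting the $\alpha$-spheres to right Bregman spheres $S_i$ via Proposition~\ref{prop:alphabdrep} (keeping the simplex constraint as the intersection with $\alphasimplex$), and then commuting the vertical projection $\downarrow$ with the intersection of the lifted hyperplanes $H_{S_i}$ on the graph $\calF_\alpha$, exactly as in the paper's proof of the two-Bregman-sphere intersection proposition. Your justification that $\downarrow$ restricted to the graph is a bijection (so lifts are unique) is the right reason the commutation holds, and your bookkeeping of right versus left spheres and of the simplex constraint is accurate.
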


The representational Bregman divergence point of view of $\alpha$-divergences (Proposition~\ref{prop:alphabdrep}) allows one to extend easily the computational geometry toolbox of Bregman divergences to $\alpha$-divergences. For example, we may consider nearest neighbor query data structures for $\alpha$-divergences following~\cite{nielsen2009bregman}, or consider the smallest enclosing $\alpha$-divergence ball following~\cite{nielsen2008smallest}.
The circumcenter of the  smallest enclosing ball is also called the Chebyshev point~\cite{candan2020chebyshev} and finds application in universal coding in information theory.

\section{Summary}

In this work, we defined the notion of curved Bregman divergences (Definition~\ref{def:curvedEF}) by analogy to curved exponential families~\cite{CEF-Amari-1982}.
We proved that the barycenter of a finite weighted set of points belonging to a nonlinear subspace $\calU$ (submanifold) of a Bregman parameter space $\Theta$ amounts to a right Bregman projection of the center of mass onto $\calU$ (Theorem~\ref{thm:proj}). 
We reported several examples of curved Bregman divergences: (1) symmetrized Bregman divergences (Jeffreys-Bregman divergence~\cite{nielsen2019jensen}), 
  (2) Kullback-Leibler divergence between circular complex normal distributions, and
	(3) Kullback-Leibler divergence between categorical distributions (Example~\ref{ex:curvedKLDcat}).
	As an application, we handled $\alpha$-divergences as curved representational Bregman divergences and show how to compute the intersection of $\alpha$-spheres 
	(Proposition~\ref{prop:alphabdrep}), i.e., spheres defined according to $\alpha$-divergences.
	
Future work will consider the information geometry of submanifolds of the Hessian manifolds with applications.	
	
\bibliographystyle{plain}
\bibliography{subdimcurvedBDV5}

\appendix

\section*{Code snippet in {\sc Maxima}}\label{sec:maxima}

\subsection{Circle curved Euclidean divergence yields cosine dissimilarity}\label{sec:a:cosdis}
The following code snippet is written in the computer algebra system 
{\sc Maxima} (\url{https://maxima.sourceforge.io/}):

\begin{verbatim}
/* curved circle Euclidean (Bregman) divergence = cosine dissimilarity */
kill(all)$
BD(theta1,theta2):=(1/2)* ( (theta1[1]-theta2[1])**2  + (theta1[2]-theta2[2])**2);
theta(u):=[cos(u),sin(u)];
exu1:random(float(2*%pi));
exu2:random(float(2*%pi));
extheta1: theta(exu1);
extheta2: theta(exu2);
BD(extheta1,extheta2);
BDc(u1,u2):=1-cos(u1-u2);
ratsimp(%);
BDc(exu1,exu2);
\end{verbatim}

\subsection{Geometric mean is Itakura-Saito symmetrized/COSH centroid}\label{sec:a:sbdis}

\begin{verbatim}
/* Itakura-Saito divergence: Check that geometric mean is symmetrized centroid thetaS */
kill(all);
F(theta):=-log(theta);
gradF(theta):=-1/theta;
BD(theta1,theta2):=F(theta1)-F(theta2)-(theta1-theta2)*gradF(theta2);
thetabar  : (theta1+theta2)/2;  /* arithmetic mean */
thetaubar: 2/((1/theta1)+(1/theta2)); /* harmonic mean */
thetaS : sqrt(theta1*theta2); /* geometric mean */
/* let us check that thetaS belongs to the mixed bisector */
BD(thetabar,thetaS)-BD(thetaS,thetaubar);
ratsimp(%);
%,logcontract;
\end{verbatim}

\end{document}